\newtheorem{observation}{Observation}
\newcolumntype{L}[1]{>{\raggedright\let\newline\\\arraybackslash\hspace{0pt}}m{#1}}
\newcolumntype{C}[1]{>{\centering\let\newline\\\arraybackslash\hspace{0pt}}m{#1}}
\newcolumntype{R}[1]{>{\raggedleft\let\newline\\\arraybackslash\hspace{0pt}}m{#1}}
\newcommand*{\field}[1]{\mathbb{#1}}%
\newcommand{\Gr}{\mathcal{G}}
\def\lc{\left\lceil}   
\def\rc{\right\rceil}
\begin{document}

\title{Gathering in 1-Interval Connected Graphs
\thanks{All authors were supported by the EEE/CS initiative NeST. The last author was also supported by the Leverhulme Research Centre for Functional Materials Design. This work was partially supported by the EPSRC Grant EP/P02002X/1 on Algorithmic Aspects of Temporal Graphs.}}

\author{Othon Michail\inst{1}, Paul G. Spirakis\inst{1,2}, Michail Theofilatos\inst{1}}

\institute{Department of Computer Science, University of Liverpool, UK \and Computer Engineering and Informatics Department, University of Patras, Greece\\
	Email:\email{ \{Othon.Michail, P.Spirakis, Michail.Theofilatos\}@liverpool.ac.uk}}

\authorrunning{O. Michail, P. G. Spirakis, M. Theofilatos}
\titlerunning{Gathering in 1-Interval Connected Graphs}

\maketitle

\thispagestyle{empty}

\vspace{-0.8cm}

\begin{abstract} \normalsize
	We examine the problem of gathering $k \geq 2$ agents (or {\em multi-agent rendezvous}) in dynamic graphs which may change in every synchronous round but remain always connected ({\em $1$-interval connectivity}) \cite{KLO10}. The agents are identical and without explicit communication capabilities, and are initially positioned at different nodes of the graph.
	The problem is for the agents to gather at the same node, not fixed in advance.
	We first show that the problem becomes impossible to solve if the graph has a cycle. In light of this, we study a relaxed version of this problem, called {\em weak gathering}.
	We show that only in unicyclic graphs weak gathering is solvable, and we provide a deterministic algorithm for this problem that runs in polynomial number of rounds.
\end{abstract}

\vspace{-0.3cm}

\noindent
\textbf{Keywords:} gathering, weak gathering, dynamic graphs, unicyclic graphs, mobile agents \newline

\section{Introduction and Related Work} \label{sec:intro}

In \cite{LFPPSV18}, the authors study the feasibility of gathering $k \geq 2$ agents in $1$-interval connected rings and investigate the impact that {\em chirality} (i.e., common sense of orientation) and {\em cross detection} (i.e., the ability to detect whether some other agent is traversing the same edge in the same round) have on the solvability of the problem.
To enable feasibility, they empower the agents with some minimal form of implicit communication, called {\em homebases} (the nodes that the agents are initially placed are identified by an identical mark, visible to any agent passing by it).


In this work we go beyond ring graphs and we start a characterization of the class of solvable $1$-interval connected graphs.
As we show, weak gathering is impossible if the graph contains at least two cycles, regardless of any other additional assumptions.

We then provide a deterministic algorithm that solves weak gathering in {\em unicyclic graphs}, and runs in polynomial number of synchronous rounds. A unicyclic graph is a connected graph containing exactly one cycle. Observe that ring graphs is a special case of unicyclic graphs.
The additional difficulty in these graphs comes from the fact that in most instances of initial agent configurations, the agents must gather on the cycle. However, in the model described in Section \ref{sec:model} the agents do not have the ability to distinguish the nodes that form the cycle. In Section \ref{section:weak-gathering-algorithm}, we empower the agents with some minimal form of implicit communication that allows them to assign identical labels on the nodes. We then carefully design a non-trivial mechanism that utilizes the graph topology and after $O(n^3\log{n})$ rounds the agents start moving only on the nodes of the cycle. Finally, the second part of the algorithm guarantees eventual correctness of {\em weak gathering}.


\subsection{Model and Definitions}\label{sec:model}

\noindent \textbf{Static Network Model.}
A static network is modeled as an undirected connected graph $G_U = (V, E)$, referred to hereafter as a {\em static graph}.
The number of nodes $n = |V|$ of the graph is called its {\em size}.
Every node $u \in G_U$ has $\delta(u)$ incident edges, where $\delta(u)$ is its degree. For each of them, it associates a port and the ports are arbitrarily labeled with unique labels from the set $\{0, \dots, \delta(u) - 1\}$. We call these labels the {\em port numbers}.

\vspace{0.05cm}

\noindent \textbf{Dynamic Network Model.}
Given an underlying static graph on $n$ vertices, a {\em dynamic graph} on $G_U = (V, E)$ is a sequence $\mathcal{G_D} = \{G_t = (V, E_t) : t \in \field{N} \}$ of graphs such that $E_t \subseteq E$ for all $t \in \field{N}$. Every $G_t$ is the snapshot of $\mathcal{G_D}$ at time-step $t$. We assume that the sequence $\mathcal{G_D}$ is controlled by an adversarial scheduler, subject to the constraint that the resulting dynamic graph should be {\em $1$-interval connected}.


\begin{definition}[$1$-interval-connectivity]
	A dynamic graph $G_D$ is $1$-interval-connected if for every integer $t \geq 0$, the static graph $G_t = (V, E_t)$ is connected.
\end{definition}



\noindent \textbf{Agents.}	The agents is a set $A = \{\alpha_1, \dots, \alpha_k\}$ of $k$ {\em anonymous} computational entities, each provided with memory and computational capabilities, that execute the same protocol.
They are arbitrarily placed on some nodes of the graph, and they are not aware of the other agents' positions.

More than one agent can be in the same node and may move through the same port number (i.e., the same edge) in the same round. We say that an agent $\alpha$ is {\em blocked} if the edge that $\alpha$ decided to cross in the current round is disabled by the scheduler.
We consider the {\em strong multiplicity detection} model, in which each agent can count the number of agents at its current node. Based on that information, the port labeling and the contents of its memory, it determines whether or not to move, and through which port number.
When two or more agents move in opposite directions of the same edge in the same round, we can assume that they can either detect this event or not. If yes, we say that the system has {\em cross detection}.

We assume that the nodes of $G$ do not have unique identifiers, and the agents do not have {\em explicit} communication capabilities.
We do this in order to capture the limitations and the basic assumptions that make gathering in dynamic networks feasible.
Detailed assumptions needed by us to solve {\em weak gathering} in unicyclic graphs are clearly explained in Section \ref{section:weak-gathering-algorithm}.

\begin{definition}[Gathering problem]
	The {\em gathering} problem requires a set of $k$ mobile computational entities, called {\em agents}, initially located at different nodes of a graph, to gather within finite time at the same node, not known to them in advance.
\end{definition}

\begin{definition}[Weak gathering problem]
	The relaxed version of the gathering problem, called {\em weak gathering}, requires all agents to gather within finite time at the same node, or on the endpoints of the same edge. 
\end{definition}

\subsection{Impossibility results}

\begin{proposition}\label{lemma:gathering_1}
	{\em Gathering} is unsolvable in $1$-interval connected unicyclic graphs.
\end{proposition}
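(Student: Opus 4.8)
\medskip\noindent\textbf{Proof proposal.} The plan is to establish impossibility by exhibiting a single hard instance on a ring, which is a unicyclic graph, and to argue by a symmetry-preservation invariant. Since a ring is $1$-interval connected and unicyclic, and since ``unsolvable'' only requires one bad instance, it suffices to fix one graph, one initial placement of the agents, and one admissible schedule on which \emph{every} deterministic algorithm fails to gather all agents at a common node. The key idea is to use a reflective automorphism $\sigma$ whose fixed locus consists of edges rather than nodes: if the configuration stays $\sigma$-symmetric forever and $\sigma$ has no fixed node, then two mirror-image agents can never occupy the same node, so gathering at a single node is impossible. This also anticipates the later dichotomy, since the two agents can still reach the two endpoints of a $\sigma$-fixed edge, which is exactly \emph{weak} gathering.

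First I would fix the instance. Take an even cycle $C_n$ on vertices $v_0, v_1, \dots, v_{n-1}$ and let $\sigma$ be the reflection $\sigma(v_i) = v_{(n-1-i)\bmod n}$. One checks that $\sigma$ is an involutive graph automorphism whose only invariant edges are $e = \{v_{n-1}, v_0\}$ and $e' = \{v_{n/2-1}, v_{n/2}\}$, and that $\sigma$ has \emph{no} fixed vertex, since $2i \equiv n-1 \pmod n$ has no solution when $n$ is even. I would then choose a port labeling that is invariant under $\sigma$, i.e.\ for every node $u$ and port $p$ the $p$-edge at $u$ is mapped by $\sigma$ to the $p$-edge at $\sigma(u)$; such a labeling is easy to construct for a reflection of a ring by assigning matching port numbers to the $\sigma$-paired edges and to both endpoints of the two fixed edges $e, e'$. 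Finally I place exactly $k=2$ agents, one at $v_0$ and one at $\sigma(v_0)=v_{n-1}$ (distinct, as required), in the same initial internal state, and I let the adversary keep the full ring enabled in every round (the static ring trivially satisfies $1$-interval connectivity and is $\sigma$-invariant).

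The core step is the invariant: at the beginning of every round the global configuration is $\sigma$-symmetric, meaning the two agents sit at $\sigma$-image nodes with identical internal states while the snapshot is $\sigma$-invariant. I would prove this by induction on rounds. The base case holds by construction. For the inductive step, because the port labeling is $\sigma$-invariant and the two agents have identical states at $\sigma$-paired positions, their \emph{local views} coincide: each sees the same number of co-located agents (strong multiplicity detection is symmetric), the same port numbers, and the same memory, so the common deterministic protocol prescribes the same port $p$ to each. Moving along port $p$ sends the agent at $u$ to its $p$-neighbour $w$ and the agent at $\sigma(u)$ to $\sigma(w)$, preserving the $\sigma$-pairing; a blocked move is also symmetric, since $\sigma$-invariance of the snapshot guarantees that if one agent's chosen edge is disabled then so is the other's; and any cross-detection event is triggered symmetrically for both. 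Hence the configuration remains $\sigma$-symmetric after the round.

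It remains to derive the conclusion and to flag the delicate point. Once the invariant is in place, the two agents occupy nodes $u$ and $\sigma(u)$ in every round; they coincide only if $u = \sigma(u)$, which is impossible because $\sigma$ has no fixed vertex. Therefore the agents never share a node and no algorithm can solve gathering on this instance, proving the proposition. (For even $k$ one may place the agents in $\sigma$-image pairs and run the identical argument; $k=2$ already suffices for impossibility.) The step I expect to be the main obstacle is the rigorous treatment of the inductive step under the full model: one must define ``identical local view'' precisely and verify that every feature the agents are granted --- strong multiplicity detection, optional cross detection, and adversarial blocking --- respects the reflection, so that equal views provably force mirrored actions and symmetry can never be broken by deterministic means.
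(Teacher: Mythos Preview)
Your proof is correct but follows a genuinely different line from the paper's. The paper argues via \emph{adversarial blocking}: it takes an arbitrary unicyclic graph with a cycle of size $c>3$, observes that when two agents start in distinct trees hanging off the cycle they must both reach the cycle in order to meet, and then notes that once they are on the cycle the scheduler can always delete the single edge lying between them (which keeps the graph connected), so they never share a node. Your argument, by contrast, uses a \emph{symmetry invariant} on one specific even ring with a reflection-invariant port labeling and no edge removals at all, showing that the two mirror agents can never collapse onto a common vertex because the reflection has no fixed vertex. Your route is more self-contained and in fact establishes something slightly stronger --- gathering already fails when the adversary is passive and the graph is static --- whereas the paper's argument leans directly on the dynamic edge deletions and thereby more naturally motivates the subsequent relaxation to weak gathering: the scheduler can keep the agents separated by an edge, but not by more than an edge, which is precisely the slack that weak gathering exploits.
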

\begin{proof}
	Consider an underlying graph $G_U = (V, E)$, where there exists a single cycle $C$ of size $c > 3$. Assume that the number of agents is $k = 2$, and they try to solve gathering.
	$G_U$ can be represented as a ring graph, where each node $w \in C$ is the root of a tree $G_w$, $(C \setminus w) \notin G_w$, of size $s_w \geq 1$.
	Then, if the agents start from the same tree $G_w$, it is possible to meet without reaching the nodes of the ring. However, the agents are placed arbitrarily on the graph, thus, they might start from different trees. This means that the nodes must reach the ring in order to meet, in which case the scheduler can always block the path between them without violating the connectivity constraints.
	\qed
\end{proof}

\begin{proposition}\label{lemma:impossibility_1}
	{\em Weak gathering} is unsolvable in $1$-interval connected graphs, if $G$ has at least two cycles.
\end{proposition}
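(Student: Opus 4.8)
The plan is to prove unsolvability by exhibiting a single graph $G$ with two cycles, together with an initial placement, on which no algorithm can succeed -- even when the agents are additionally granted chirality, cross detection, node labels, unbounded memory, or any other side information. Since unsolvability of the class only requires one bad instance, it suffices to defeat every algorithm on this fixed $G$. Concretely, I would take $G$ to be two cycles $C_1,C_2$ sharing a single common vertex $v$ (a figure-eight), which has cyclomatic number two and hence exactly the two simple cycles $C_1,C_2$, and I would place two of the $k$ agents, say $\alpha$ and $\beta$, strictly inside different cycles, i.e. at nodes $a\in C_1\setminus\{v\}$ and $b\in C_2\setminus\{v\}$; the remaining $k-2$ agents may be placed arbitrarily.

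The engine of the argument is a connectivity budget: the scheduler may disable two edges in a round and keep $G_t$ connected, provided they do not form an edge cut. First I would specify the scheduler's strategy: in every round it disables exactly the (at most two) edges that $\alpha$ and $\beta$ attempt to cross. Because $\alpha$ is incident only to edges of $C_1$ and $\beta$ only to edges of $C_2$, the two disabled edges $e_\alpha,e_\beta$ always lie in different cycles, so $G_t=(C_1\setminus e_\alpha)\cup(C_2\setminus e_\beta)$ is the union of two paths glued at $v$, which is connected. Hence each attempted move of $\alpha$ and $\beta$ is blocked in every round, neither ever leaves its starting node, and the pair remains at $a,b$ forever. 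Since $a$ and $b$ lie in different cycles, every $a$--$b$ path passes through $v$, so their distance is at least two: they are never equal and never adjacent. Thus $\alpha$ and $\beta$ never occupy the same node nor the endpoints of a common edge, so \emph{weak gathering} of the whole set never holds, regardless of the behaviour of the other $k-2$ agents.

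I would then stress robustness and the location of the threshold. The argument uses no limitation of the agents -- only that every attempted move is blockable while connectivity is preserved -- so it holds \textbf{regardless of any additional assumptions}. The contrast with the unicyclic case explains why the frontier is exactly two cycles: with cyclomatic number one the scheduler can disable only one edge per round, so whenever $\alpha$ and $\beta$ attempt distinct edges it cannot block both (in a single cycle any two distinct edges form a cut), and this is precisely the slack exploited by the positive algorithm of Section~\ref{section:weak-gathering-algorithm}.

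The step I expect to be the main obstacle is the connectivity-preservation claim: I must guarantee that disabling the two attempted edges can never disconnect $G_t$, and this is exactly where ``two cycles'' is doing the real work, through the choice of a graph in which the two agents' incident edges always belong to different cycles and hence never form an edge cut. A secondary point to treat carefully is the multi-agent bookkeeping: I must confirm it is enough to freeze the two designated agents, since weak gathering demands all $k$ agents coincide or share an edge while these two provably never do, and that the moves of the other agents (who may cross or share nodes with $\alpha,\beta$) never enable $\alpha$ or $\beta$ to advance.
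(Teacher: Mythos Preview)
Your argument for the figure-eight graph is correct and, for that particular case, is essentially the paper's own reasoning (their ``one common vertex'' case): block each of two designated agents on its own cycle, and observe that removing one edge from each cycle never disconnects $G$. The freezing-and-distance argument is clean.

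The gap is one of scope. You read the proposition as an existential statement about the class (``one bad instance suffices''), but the paper intends it universally over the underlying graph: for \emph{every} $G$ with at least two cycles, weak gathering on the $1$-interval connected dynamics over $G$ is unsolvable. This is exactly how the result is used later to characterise solvable underlying graphs (cf.\ the sentence right after the proposition: ``the above Propositions hold for any underlying graph with one and at least two cycles respectively''), and it is why the paper's proof does a case analysis on how two cycles can sit inside $G$: (i) vertex-disjoint cycles joined by a path, (ii) cycles sharing at least two vertices, (iii) cycles sharing exactly one vertex, with a closing remark that more than two cycles only makes things easier. Your proof covers only case~(iii).

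Concretely, your reduction ``it suffices to defeat every algorithm on this fixed $G$'' does not yield the intended statement: on an arbitrary $G$ with $\ge 2$ cycles you still owe a placement of $\alpha,\beta$ and a scheduler that keeps $G_t$ connected while blocking both. Your invariant ``the two attempted edges always lie in different cycles, hence never form an edge cut'' is specific to the figure-eight; in the disjoint-cycles case the agents can drift onto the connecting path, and in the shared-$\ge 2$-vertices case the two cycles share edges, so the ``different cycles $\Rightarrow$ not a cut'' step needs a different partition argument. The paper handles these by splitting $V$ into $L,M,R$ and arguing that the scheduler can always delete one edge incident to each designated agent without separating $L\cup M\cup R$. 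Your figure-eight proof slots in as case~(iii); to match the proposition you need the other two cases as well.
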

\begin{proof}
	Consider the case where $G$ contains exactly $2$ cycles $c_1$ and $c_2$ with no common vertices. All nodes of $c_1$ and $c_2$ can be roots of (independent) trees, and there is a single path connecting $c_1$ and $c_2$, through nodes $u \in c_1$ and $w \in c_2$. The nodes of this path can also be roots of (independent) trees.
	Consider now a partitioning of $G$ into $3$ groups $L$, $M$, and $R$, where $R$ and $L$ contain all the nodes of the cycles $c_1$ and $c_2$ respectively, and the trees starting from their nodes, except $u$ and $w$. $M$ contains all the nodes in the path between $u$ and $w$, including $u$ and $w$ and the trees starting from these nodes.
	
	The connectivity constraints imply that at most one edge from $c_1$ and one from $c_2$ can be missing in each round.
	If there are two agents that try to solve gathering and start from $R$ and $L$, the scheduler can always block the path to $u$ and $w$ without violating the connectivity constraints. This means that none of them can ever reach $u$ and $w$, thus, they can never meet or end up in neighboring nodes.
	
	Now, consider the case where $c_1$ and $c_2$ have at least two common vertices. Then, the partition $M$ contains all the common vertices (and the trees starting from these nodes), while $R$ and $L$ contain the rest of the nodes of $c_1$ and $c_2$ respectively (and the corresponding trees).
	The scheduler can again remove two edges from $G$ in each round from two different partitions (otherwise, if the scheduler removes two edges of the same partition, the connectivity constraints are violated). With a similar argument, the scheduler can always block an agent from reaching a different partition, thus, two agents that start from different partitions can never meet with each other, or move to neighboring nodes.
	
	In the case where $c_1$ and $c_2$ have only one common vertex $w$, if two agents start from $c_1 \backslash w$ and $c_2 \backslash w$, the scheduler can again block the path between the agents and $w$, by removing the corresponding edges in each cycle.
	
	Observe that the above cases apply also in graphs with more that two cycles. 
	This means that {\em weak gathering}, and the harder case of gathering cannot be solved in this setting.
	\qed
\end{proof}


Note that the above Propositions hold for any underlying graph with one and at least two cycles respectively.
The case of $1$-interval connected graphs without any cycle is equivalent to having a static tree graph, where the problem of agent gathering has been extensively studied.

\section{Weak gathering} \label{section:weak-gathering-algorithm}

In light of the above impossibilities, we hereafter consider unicyclic graphs, and we provide a deterministic algorithm that solves {\em weak gathering}.
The additional assumptions that we made are discussed and motivated later in this section. Briefly, we provide the agents with non-constant memory and knowledge of $k$.
In addition, for ease of presentation we also provide the agents with two identical movable tokens that can be placed on the nodes and picked up in subsequent visits of these nodes and are indistinguishable from the other agents' tokens, and we assume that the agents have cross-detection and knowledge of $n$.
In Section \ref{sec:drop_assumption} we discuss how we could drop the latter assumptions.

First, observe that in $1$-interval connected unicyclic graphs, the scheduler can block an agent from reaching some parts of the graph, however, all agents can move to the nodes of the unique cycle.
The scheduler is only allowed to remove one edge in each round, otherwise the connectivity constraints will be violated. Therefore, it is possible for the agents to reach two neighboring nodes and solve {\em weak gathering}.
In other words, if there is an execution of a {\em weak gathering} algorithm that an agent $\alpha$ never reaches the cycle, then there is a sequence of (connected) graphs where $\alpha$ never reaches the same or neighboring node with the rest of the agents.

\begin{observation}\label{observation:weak-gathering-1}
	In order to achieve {\em weak gathering} in $1$-interval connected unicyclic graphs, the agents must gather on the nodes of the cycle.
\end{observation}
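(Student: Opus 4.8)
The plan is to argue by contradiction, exploiting the special structure of unicyclic graphs: every non-cycle edge is a bridge, so deleting it would disconnect $G$ and violate $1$-interval connectivity. Hence the adversary may remove \emph{only} cycle edges, and at most one per round; in particular every tree edge is present in every snapshot. First I would record two consequences. Every agent can climb the (always-present) tree that contains it and reach its root, which is a cycle node, so reaching \emph{the cycle} is always possible. On the other hand, a single fixed cycle node $w$ can be sealed off: whenever an agent attempts to step onto $w$, the adversary deletes exactly the one cycle edge the agent tries to cross (one of the two edges incident to $w$), which keeps the graph connected yet keeps the agent from ever landing on $w$.

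Next I would set up the contradiction. Suppose a correct weak-gathering algorithm admits an execution in which the agents end up gathered off the cycle, i.e.\ at a node $x$, or on the endpoints of an edge one of whose endpoints $x$ lies strictly inside some tree $G_w$ rooted at a cycle node $w$. Since agents are placed arbitrarily, I would choose an initial configuration in which at least one agent $\alpha$ starts in a \emph{different} tree $G_{w'}$ with $w' \neq w$. Because $x$ sits behind the single cut vertex $w$, the only way for $\alpha$ to reach $x$, or any off-cycle neighbour of $x$, is to first enter $G_w$ through its root $w$.

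Then I would apply the sealing-off strategy to $\alpha$ at the node $w$: whenever $\alpha$ attempts the edge that would bring it onto $w$, the adversary removes that one cycle edge, and otherwise leaves the cycle intact. Every snapshot is then a spanning tree of $G$, hence connected, so $1$-interval connectivity is respected; yet $\alpha$ never reaches $w$ and therefore never enters $G_w$, so it can occupy neither $x$ nor any off-cycle neighbour of $x$. Thus the claimed gathering configuration is never realised, contradicting correctness. The same reasoning handles the boundary case where $x$ is a child of $w$ and the target ``edge'' is $(w,x)$, since $\alpha$ is blocked from both of its endpoints. I conclude that a correct algorithm must bring the agents together on the cycle, where, as noted above, two neighbouring nodes are always reachable.

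The main obstacle I expect is making the sealing-off argument fully rigorous against an agent that may approach $w$ from either side and reverse direction arbitrarily: I must verify that blocking only the single edge the agent currently attempts is always sufficient (the agent occupies one node and attempts one port per round, so at most one of the two edges incident to $w$ is ever relevant in a given round), that this strategy never forces the removal of two edges simultaneously, and that the resulting snapshot---a unicyclic graph minus one cycle edge---is connected throughout, so the construction indeed stays within the $1$-interval connected model.
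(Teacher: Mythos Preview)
Your proposal is correct and follows essentially the same approach as the paper: both argue that the adversary, by removing a single cycle edge per round, can confine an agent to the subtree $G_w$ it starts in (equivalently, prevent an agent from entering a different subtree $G_{w'}$), so agents starting in distinct trees cannot meet off the cycle. Your write-up is considerably more detailed than the paper's two-sentence justification, and your explicit check that only one edge removal per round suffices against an agent approaching $w$ from either side is a useful addition the paper leaves implicit.
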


\noindent Call $C$ the set of nodes of the unique cycle and $G_w$ the (connected) tree starting from node $w \in C$ and $(C \setminus w) \notin G_w$. The above observation holds because an agent in a node $v \in G_w$ can be completely blocked from reaching any other node $u \notin G_w$. This means that {\em weak gathering} can only be achieved if the agents first reach a node in $C$ (otherwise they will not be on the same or neighboring nodes).
The above observation means that the agents must first perform some sort of exploration on the graph in order to reach the cycle and gather on some node $v \in C$.

\vspace{0.1cm}

\noindent \textbf{Communication assumptions:}
A very common assumption that makes the problem solvable in ring graphs is for the agents to have distinct identities \cite{CPL12,MDKKPV06,DFP06}. Alternatively, another assumption which pertains to the communication capabilities of the agents, is either to supply each node with a {\em whiteboard} where the agents can leave notes as they travel \cite{SKSOKM20}, mark the nodes that the agents are initially placed (identifiable and identical nodes called {\em homebases}), or provide the agents with a constant number of movable tokens that can be placed on nodes, picked up, and carried while moving \cite{CDKK08}. Under the first communication assumption the problem becomes solvable even in the presence of some faults \cite{BFFS07,CJSS07}.
In \cite{LFPPSV18}, the authors used homebases in order to break the symmetry in $1$-interval connected ring graphs.
In our work, we choose to empower the agents with two movable and identical tokens (called hereafter \textit{pebbles}). The pebbles of each agent are also indistinguishable from those of another agent. In Section \ref{sec:drop_assumption} we discuss how this assumption could be dropped.





\vspace{0.1cm}

\noindent \textbf{Cross detection:}
In the algorithm of Di Luna et al. \cite{LFPPSV18} the authors considered the case where cross detection is available, and then they construct a mechanism which avoids agents crossing each other (i.e., no agents traverse the same edge at the same round and in opposite directions), called {\em Logic Ring}, in order to drop this assumption.
We now assume that the agents have cross-detection, and in Section \ref{sec:drop_assumption} we explain how this mechanism can be applied in our setting.

\vspace{0.1cm}

\noindent \textbf{Memory requirements:}
A very significant aspect of mobile agent systems is the memory requirements of the agents. In \cite{budach1978}, the authors show that the problem of exploring a static graph with a finite state automaton (or agent) is unsolvable if the port numbers of the nodes are set arbitrarily. In \cite{FIPPP05}, the authors show that $\Theta(D\log{d})$ bits of memory are required to achieve exploration, where $D$ is the diameter, and $d$ the maximum degree of the graph.
An alternative approach to network design consists in graph preprocessing by setting the port numbers, so that graph exploration is easy, i.e., with constant memory \cite{GKMNZ07, ilcinkas2008}.
In this work, we provide the agents with non-constant memory, as we assume that the port labels are assigned arbitrarily.

\vspace{0.1cm}

\noindent \textbf{Knowledge of $n$ and $k$:}
Finally, we assume that the agents know the size $n$ of the graph and the number $k$ of agents.
We first show that if $k$ is not known, then {\em weak gathering} is unsolvable. Finally, in Section \ref{sec:drop_assumption} we discuss how to drop the assumption of knowing $n$.

\begin{property}
	If $k$ is not known, then {\em weak gathering} is unsolvable, even with homebases.
\end{property}
\begin{proof}
	A well known result even for static graphs is that if neither $n$ nor $k$ are known, then gathering is unsolvable, regardless of chirality and cross detection.
	We now show that if $k$ is not known, then gathering is unsolvable in unicyclic graphs.
	
	By Observation \ref{observation:weak-gathering-1}, weak gathering can only be achieved on the nodes of the cycle, otherwise the scheduler can choose two agents and completely block them from reaching the same node or the endpoints of the same edge.
	
	Consider the case where only $n$ is known. The agents in order to decide that gathering was achieved must find in some way the number of agents. This is because if they manage to reach the same node $u$, they need to decide whether to terminate, wait there, or continue moving on the graph. Assume that $u$ is on the cycle. 
	Observe that the case where they terminate, or wait on $u$ in order for the rest of the agents to gather leads to impossibility as the scheduler can choose one agent that has not reached there and block it from reaching $u$ or any neighboring node of $u$. Assume now that all agents have reached $u$. If they choose to move on the graph (even with a stronger communication model which allows them to communicate and move on the graph as a group), this would happen indefinitely.
	
	Assume now that homebases are distinguishable from the rest of the nodes. Because of the above, the agents need in some way to infer the number of agents on the graph. They are not allowed to communicate in any way, thus, each agent needs to explore the graph, visit and count all homebases.
	However, because of the cycle and the fact that the agents cannot leave labels on the nodes of the graph, every time an agent traverses the cycle, all nodes are seen as unexplored, and there is no way of determining whether a node was explored in a previous step or not (i.e., a homebase was previously visited). This means that a single homebase might be counted more than once, thus, the agents will eventually not hold the correct value of $k$.
	\qed
\end{proof}



\subsection{Weak gathering algorithm}

Our deterministic algorithm is divided into two phases, and the overall idea is the following:
During the first phase all agents place one of their pebbles on their initial nodes.
Then, they start exploring the graph using a DFS approach.
Each agent $\alpha$ gradually moves its pebble closer to the cycle, and when its pebble reaches the cycle, $\alpha$ moves to the second phase.
When all agents have moved to the second phase, the executed process ensures that they will eventually gather on the cycle.

In order to make the description of the algorithm more clear, we first introduce a number of variables that are stored in the local memory of each agent.

\begin{itemize}
	\item \textit{round}: Counter that is increased by one in each round.
	\item \textit{Graph (or $\Gr$)}: Contains the lists that represent the nodes visited by an agent. A specific node of the underlying graph might correspond to multiple nodes in $\Gr$. We refer to the {\em Graph} of an agent $\alpha$ as $\Gr_\alpha$.
	\item \textit{stepsAway}: The distance between the agent and its pebble in $\Gr$.
	\item \textit{epoch}: The epoch, which determines the maximum distance of the DFS exploration in $\Gr$ ($2^{epoch}$).
	\item \textit{roundsBlocked}: The number of rounds that the agent is continuously blocked.
	\item \textit{pebblesFound}: The number of pebbles found in distinct nodes of $\Gr$.
\end{itemize}

\noindent \textbf{Phase 1.} This phase is responsible for traversing the graph (exploration) and identifying the nodes that form the cycle.
We now present all procedures that take place during this phase.

\vspace{0.1cm}

\noindent {\em Mapping of the graph.}
The problem of graph mapping has been extensively studied in the literature. Most of the algorithms rely on either the usage of {\em whiteboards} \cite{DFNS05,DFKNS07}, or assume that the agents can observe the memory contents of each other when they meet on the same node \cite{GTKC12}. In the latter, the agents maintain multiple hypotheses when ambiguity about the graph topology occurs, and they resolve it when they meet.

Clearly, in order to explore and map an anonymous graph, the agents need to mark the nodes, so as to identify previously visited nodes on subsequent visits. However, in our work the marks (i.e., pebbles) made by each agent are indistinguishable from those made by another, thus, it is not clear whether multiple agents can successfully map an anonymous graph.
Despite of the above problem, we have carefully designed an algorithm that correctly maps part of the graph and allows the agents to successfully identify the cycle without having to resolve the ambiguity that occurs.

Each agent $\alpha$ stores in its local memory a list of the neighbors of each vertex visited and the port numbers that led to those nodes. Let $u$ be the initial node of an agent $\alpha$. Then, $\alpha$ constructs a list $L(u)$ which represents $u$. Assume that it traverses an edge through port number $i$ and arrives at a node $w$ at port number $j$. It then constructs a new list $L(w)$, and in $L(u)$ stores $i$ and a pointer to $L(w)$. At the same time, it stores in $L(w)$ the port $j$ and a pointer to $L(u)$.
We call these lists the {\em Graph} of $\alpha$ or $\Gr_\alpha$.

\vspace{0.1cm}

\noindent {\em Graph exploration.}
We use a traditional technique which makes each agent traverse a tree in the DFS way. In a round, when the agent arrives at node $u$ through a port $i$, it leaves $u$ through port $(i + 1)$ mod $\delta(u)$ in the next round (if the edge is available). Initially, the agents start by leaving the port $0$.
We divide the execution into epochs, and in each epoch $e$ the agents perform DFS up to distance $2^e$ in $\Gr$. In particular, when {\em stepsAway} $= 2^e$, the agent moves through the port that it arrived from.
Initially $e = 0$, and when the agent returns to its initial node and has traversed all its neighbors in that phase, it increases $e$ by one.

\vspace{0.1cm}

\noindent {\em Pebbles on the cycle.} During the exploration process, the agents check a number of predicates that help them to gradually move their pebble to the cycle and achieve gathering.
In particular, whenever an agent $\alpha$ reaches a leaf, it marks the node in $\Gr_\alpha$ with a special character $\cancel{C}$, indicating that it does not belong to the cycle, and never moves to that node (of $\Gr_\alpha$) again. In addition, if a node $u \in \Gr_\alpha$ with degree $\delta(u)$ has $\delta(u) - 1$ marked neighbors, the agent also marks $u$.
Whenever an agent $\alpha$ reaches the node $u \in \Gr_\alpha$ containing its own pebble, and $u$ is marked with $\cancel{C}$, $\alpha$ moves its pebble to the (unique) neighboring node $w \in \Gr_\alpha$ that is not marked.

\vspace{0.1cm}

\noindent {\em Cycle detection.}
When an agent $\alpha$ encounters a pebble, it marks the node in $\Gr_\alpha$ (locally) with a special character $\mathcal{T}$, and increases the value of the counter \textit{pebblesFound} by one (if already marked with $\mathcal{T}$ or $\cancel{\mathcal{T}}$, it does nothing). In particular, only the first time that it visits a node with a pebble during an epoch $e$ it marks the corresponding node in $\Gr_{\alpha}$. This is important in order to maintain a consistent knowledge about the positions of the pebbles as we show in Lemma \ref{lemma:weak-gatherin-3}.
We hereafter call the $\mathcal{T}$-marked nodes of $\Gr$ {\em pebbles of $\Gr$} or $\mathcal{T}_{\Gr_\alpha}$, and the set of nodes with pebbles on the underlying graph {\em distinct pebbles} or $P$.
We say that a node $w \in \mathcal{T}_{\Gr_\alpha}$ corresponds to a node $u \in P$, and we write $w \rightarrow u$, if $w$ was marked when the agent $\alpha$ visited $u$. If $\forall u \in P$, $\exists w \in \mathcal{T}_{\Gr_\alpha}: w \rightarrow u$, we say that the agent $\alpha$ visited all distinct pebbles, and we write $\mathcal{T}_{\Gr_\alpha} \equiv P$. 
When $\alpha$ counts $k + 1$ pebbles ({\em including its own pebble}), we show that it can verify whether these pebbles are on the cycle or not. In case that they are on the cycle, the agent enters to the second phase of the algorithm which only moves on the nodes of the cycle and eventually achieves {\em weak gathering}.
At this point, $\mathcal{T}_{\Gr_\alpha}$ contains the nodes of $\Gr_\alpha$ in which $k + 1$ pebbles were found by an agent $\alpha$.
Given $\mathcal{T}_{\Gr_\alpha}$ and $\Gr_\alpha$, agent $\alpha$ constructs in its local memory a (shortest) path $\mathcal{P} = \{e_0, e_1, \dots, e_s\}$ with vertex sequence $\mathcal{V} = \{v_0, v_1, \dots, v_s\}$, from the first to the last node of $\mathcal{T}_{\Gr_\alpha}$ in $\Gr_\alpha$, containing all nodes of $\mathcal{T}_{\Gr_\alpha}$.
Here, $e_l = (p_i, p_j)$, where $p_i$ is the port number that led to $v_l$, and $p_j$ the port number of $v_l$ that it arrived at (i.e., the port numbers in the endpoints of the same edge).
If $\mathcal{P}$ is not a line in $\Gr_\alpha$, it marks all nodes of $\mathcal{T}_{\Gr_\alpha}$ with a different character $\cancel{\mathcal{T}}$ and resets \textit{pebblesFound} to zero (this can be achieved in one round locally, without visiting all nodes of $\mathcal{T}_{\Gr_\alpha}$ again).
Otherwise, it constructs a cycle $C'$ assuming that the first and last pebbles $f, \; l$ visited correspond to the same pebble $p \in P$, i.e., $f \rightarrow p$ and $l \rightarrow p$.
In other words, it assumes that the nodes $v_0, \; v_s \in \mathcal{V}$ correspond to the same node of the underlying graph.
To construct $C'$, it sets $\mathcal{P} = \{e_1, \dots, e_s\}$ and when it reaches $v_s$ it moves through the port specified by $e_1$.
Then, it traverses $C'$ in order to verify if it is a cycle (the ports visited and the locations of the pebbles must agree with $C'$). If yes (note that this can only happen if its own pebble is on the actual cycle of the underlying graph), it enters to the second phase of the algorithm.
Otherwise it moves back to the node where it was in the beginning of the {\em cycle detection} step by following the reverse path, and continues with the exploration of the graph.
In order to avoid situations where an agent $\alpha$ is blocked in consecutive unsuccessful {\em cycle detections}, it does not count again the pebbles that belong on the same nodes of $\Gr$ until the end of its current epoch (i.e., it marks all nodes in $\mathcal{T}_{\mathcal{G}_a}$ with $\cancel{\mathcal{T}}$).
Note that if $C$ is the cycle of the underlying graph, then $C'$ might be a set of multiple traversals of $C$. For example, if $C = \{v_0, v_1, \dots, v_m\}$, then $C' = \{v_0, v_1, \dots, v_m, v_0, v_1, \dots, v_m, v_0, v_1, \dots\}$.
If $C \equiv C'$, the agents solve {\em weak gathering}.
Otherwise, {\em weak gathering} will fail and the agents continue moving only on the nodes of $C'$ (second phase) until they find again $k + 1$ pebbles, in which case, they reconstruct $C'$.
Finally, note that $\alpha$ can now traverse the cycle both clockwise and counterclockwise, though, the orientation might be different for each agent. We later explain how to obtain chirality (i.e., common sense of orientation).

\vspace{0.2cm}


\noindent \textbf{Phase 2.} When an agent $\alpha$ enters to this phase, it means that is has constructed a cycle $C'$ in its local memory, and all nodes of $C'$ are on the actual cycle of the underlying graph.
In this phase the agents assume that all pebbles have reached the cycle, and they perform some actions that would solve weak gathering in case that this assumption is true.
An agent can either be in state  {\em walking} or {\em gathering}, and initially it is in state {\em gathering}.
In {\em Grouping} we explain how the agents form groups when certain predicates are satisfied.
We call a set of agents a {\em group} if they are on the same node and move in the same direction.
In the first state ({\em walking}), it traverses the cycle counterclockwise (according to its own sense of orientation), and when it visits $k + 1$ pebbles (i.e., $\text{pebblesFound} = k + 1$), it reconstructs a cycle $C'$ as explained in {\em cycle detection} and changes its state to {\em gathering}.
In the second state, based on the distances between the pebbles in $\Gr_\alpha$ and the port labeling in cycle $C'$ it elects a node $u \in C'$ as the meeting point. In {\em Unique node election} we explain how this is achieved.
If for any reason the agents do not agree on $u$, {\em weak gathering} will not succeed, thus they reset \textit{pebblesFound} to zero and their state becomes \textit{walking}.
As we show later, after $O(n^3\log{n})$ rounds all pebbles reach the cycle, and phase $2$ will eventually succeed.
If they agree on $u$, they can also obtain chirality by utilizing the port numbers of $u$.
Consider a node $u \in C$ with ports $p_1$ and $p_2$ that lead to its neighboring nodes in $C'$. Assume, w.l.o.g., that $p_1 < p_2$. Then, $\alpha$ sets as {\em clockwise} the orientation that is defined by traversing $p_1$ and {\em counterclockwise} the one defined by traversing $p_2$.

After determining the node $u$ where they should meet, the agents move for $2n$ rounds towards $u$ by following the shortest path (we call this the {\em first step} of state {\em gathering}).
We now distinguish the following cases for an agent $\alpha$, depending on whether $\alpha$ reached $u$, or not, after $2n$ rounds.
If an agent arrived at node $u$ after $2n$ rounds in state {\em gathering}, it checks whether all agents are there. If yes, it terminates. Otherwise, it starts moving clockwise on the cycle for $n$ rounds ({\em second step} of state {\em gathering}).
As we show later, by the end of round $2n$ all agents that entered to state {\em gathering} during a time window of length $n$ are divided into at most two groups.
The rest of the agents that due to missing edges did not reach the elected node after $2n$ rounds in state {\em gathering}, they start moving counterclockwise as a group for $n$ rounds ({\em second step}).
We want the agents in each group to start the {\em second step} of this phase at the same time (the two groups may start at different rounds). However, observe that the agents might not enter to state {\em gathering} at the same time. In {\em Grouping}, we explain how the agents start walking on the cycle as groups.
At this point, there are two groups of agents moving towards each other.
In any case, the two groups of agents will either end up on the same node, or they will cross each other, or they will become blocked on the endpoints of the same edge.
In {\em Grouping}, we explain how these groups of agents merge after at most $n$ rounds or terminate in neighboring nodes.

\vspace{0.1cm}

\noindent {\em Blocked agents and termination condition.} The overall idea is that if an agent is blocked long enough for the rest of the agents to reach some endpoint of the missing edge, then weak gathering is achieved and the agents terminate.
To achieve this, in each round, if an agent $\alpha$ is blocked, it increases \textit{roundsBlocked} by one and waits there until either the edge becomes available again (in this case it resets \textit{roundsBlocked} to zero), or the termination condition is satisfied. In particular, if $\textit{roundsBlocked}_{\alpha} \geq \delta n\log{n}$, for some small constant number $\delta$ and {\em no other agent arrived at the same node during that round}, it places its second pebble and terminates. The rest of the agents on the same node recognize that the number of pebbles on that node was increased by one, thus they terminate.
There are two cases for the rest of the agents that are on the other endpoint $w$ of the missing edge. Either the missing edge becomes enabled again, or it remains disabled long enough for some other agent $\alpha' \in w$ to place its second pebble on $w$, and the same process occurs. In the first case, all agents move on the other endpoint where the rest of the agents are (and have terminated), they count that the total number of agents is $k$, and they terminate.

\vspace{0.1cm}

\noindent {\em Unique node election.} The goal of this subroutine of the algorithm is to break the symmetry in the cycle $C$ of the graph and elect a unique node as the meeting point of the agents.
This is feasible in most cases, given the topology of the underlying graph $G$, the port labeling of the nodes, and the distances between the final positions of the pebbles in $C$.
It is very important that all agents elect, eventually, the same node, but this can be achieved only if the information used to break the symmetry between the agents is identical. 
During the exploration phase, an agent $\alpha$ stores in its local memory information about the nodes as it visits them, in $\Gr_\alpha$.
However, due to the cycle and because the agents are placed arbitrarily on the graph, the graphs $\Gr$ of two agents might be different.
This means that $\Gr_\alpha$ cannot be used to break the symmetry, otherwise they might elect different nodes.
However, during the second phase of the algorithm, all agents construct in their local memories a cycle $C'$ which will eventually be identical to each other. Thus, the final positions of the pebbles and the port labeling of the nodes in $C'$ can lead to the election of a unique node as their meeting point.
If the above configuration is symmetric, the agents recognize that {\em weak gathering} cannot be achieved. In this case, they continue executing the algorithm, as the pebbles might have not reached their final positions.

\vspace{0.3cm}

\noindent {\em Grouping.} This subroutine of the algorithm is used in order to form groups of agents in the following cases. 

(1) During the {\em first step} in state {\em gathering}, the agents move towards the elected node for $2n$ rounds. However, not all agents start this step at the same time. The first predicate of {\em grouping} is responsible to synchronize the agents so as to begin the {\em second step} at the same time, and then continue moving as a group.
In particular, when an agent $\alpha$ counts $2n$ rounds of phase $2$, it then moves either clockwise or counterclockwise, depending on whether it reached the elected node or not.
Let $u$ be the node where $\alpha$ was at the end of the {\em first step} of phase $2$. Then, it moves only for one round and waits there (at most $n$ rounds) for the rest of the agents in $u$ to move at the same node as $\alpha$. To achieve this, the agents in $u$ detect that the number of agents was decreased by one. If $\alpha$ crossed an agent during that round, it moves back to $u$ and repeats the same process again. When the rest of the agents in $u$ reach $\alpha$, they continue moving as a group.

(2) If an agent in state {\em walking} visits the elected node $u$ and there are some other agents there, it assumes that they are in state {\em gathering}. In this case, it enters to state {\em gathering} and waits there at most $2n$ rounds, or until the first predicate of {\em grouping} is satisfied.


(3) When two agents or groups of agents cross each other or visit the same node, they merge into a single group. To achieve this, when this happens, the group which is closer to the elected node $u$ (say $G_1$) by following the clockwise path, reverses direction. The other group $G_2$ waits until $G_1$ catches them. Then, the agents that were in state {\em gathering} continue walking in their initial direction, while the agents in state {\em walking} reverse direction (if not already did).
After a successful edge traversal of $G_1$, if $G_2$ is missing, it reverses direction again.
In order to avoid situations where the two agents or groups of agents get stuck, after their next successful edge traversal they do not try to group (for one round).
Similarly, if the groups of agents visit the same node, the agents in state {\em walking} reverse direction and perform the same procedure as in the previous case. Here, the group of agents in state {\em gathering} does not do anything.
Finally, after a successful merging, the agents in state {\em walking} change to state {\em gathering}.
In the cases where the edge between the two groups is missing, they wait until it becomes available again, or until the {\em termination condition} is satisfied.


If in any of the previous cases the number of agents is $k$, they all terminate.


\begin{algorithm}[H]
	\textbf{Result:} Identifies the nodes that form the cycle. 

	(1) Initialization of variables.\;
	
	(2) Place a pebble on the initial node.\;
	
	(3) Explore graph up to distance $2^e$, and create a mapping in $\Gr$. Mark all nodes of $\Gr$ where pebbles are found with $\mathcal{T}$.\;
	
	(4) Mark all nodes with $\cancel{C}$ in $\Gr$ that have exactly one unmarked neighbor. When you mark the node where the pebble is, move it on the unique unmarked neighbor.\;
	
	(5) Upon marking $k+1$ nodes with $\mathcal{T}$, construct the (shortest) path $C'$ that contains all these nodes and traverse it. If successfully traversed, move to {\em Phase $2$}.
	Otherwise, mark these nodes with $\cancel{\mathcal{T}}$, move to initial node and continue with the exploration. 
	
	\caption{First phase}
\end{algorithm}

\begin{algorithm}[H]
	\textbf{Result:} Solves {\em weak gathering} on some node of the cycle.
	
	(1) State {\em walking}:
	
	(i) Move counterclockwise.
	
	(ii) Upon counting $k+1$ pebbles, elect a leader node $u$ using the port labeling and the positions of the pebbles, and change to state {\em gathering}.
	
	\vspace{0.2cm}
	
	(2) State {\em gathering}:
	
	(i) Move towards $u$ for $2n$ rounds.
	
	(ii) If reached $u$, after round $2n$ move clockwise for $n$ rounds. Otherwise, move counterclockwise for $n$ rounds.
	
	\vspace{0.2cm}
		
	(3) {\em Grouping} and {\em termination}:

	(i) In each round check all predicates of {\em grouping} and perform the corresponding actions.
	
	(ii) If at any time the number of agents is $k$, or the number of rounds that it is blocked is more than $\delta n \log{n}$, terminate.
	
	

	\caption{Second phase}
\end{algorithm}

\subsection{Analysis}

We first show that after the end of the first phase of the algorithm, all agents correctly identify the nodes that form the cycle $C$, and then they only move on $C$.
In addition, because of the fact that an agent can be blocked on a node of $C$ indefinitely, we show that during the first phase all agents reach some endpoint of the missing edge after at most $\delta n\log{n}$, for some small constant number $\delta$, rounds.

We then continue and show that in the second phase of the algorithm all agents eventually enter to state {\em gathering}, and they correctly solve {\em weak gathering}.

\subsubsection{First phase of the algorithm}

\begin{lemma}\label{lemma:weak-gathering-1}
	Let $d_t(p_\alpha, C)$ denote the (shortest) distance between the pebble $p_\alpha$ of an agent $\alpha$ and the closest node of the cycle $C$ at round $t$. Then, $\{d_t(p_\alpha, C)\}$, $t \geq 0$ is a decreasing sequence (i.e., $d_t \geq d_{t+1}$).
\end{lemma}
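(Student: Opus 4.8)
The plan is to show that the first pebble $p_\alpha$ can only ever move strictly toward the cycle, so that $d_{t+1} \leq d_t$ in every round, with equality except on the rounds where $\alpha$ relocates its pebble. During Phase~1 the only rule that moves the first pebble is the one in the \emph{Pebbles on the cycle} step: $\alpha$ relocates $p_\alpha$ from a node $u$ to its unique unmarked $\Gr_\alpha$-neighbour precisely when $u$ has just been marked $\cancel{C}$. Thus the whole statement reduces to the claim that this unique unmarked neighbour is always the neighbour of $u$ that lies one step closer to $C$.

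First I would identify the marking rule with the standard $2$-core peeling of the underlying unicyclic graph: a leaf is marked, and a node $u$ is marked as soon as $\delta(u)-1$ of its neighbours are marked, i.e. as soon as it has degree $1$ in the subgraph induced by the still-unmarked vertices. A first invariant to record, by induction on the round, is that no vertex of $C$ is ever marked: each $v \in C$ keeps its two cycle-neighbours, which are themselves in $C$ and hence (inductively) unmarked, so $v$ always has at least two unmarked neighbours and can never reach the threshold $\delta(v)-1$. Consequently every marked vertex lies in one of the trees $G_w$ hanging off the cycle, and for such a vertex the notions of \emph{parent} (the neighbour on the unique path to $C$) and \emph{children} are well defined.

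The core combinatorial step is then: whenever a tree-vertex $u$ gets marked, its unique unmarked neighbour is its parent. I would argue by a descent toward the cycle. Suppose not, and let $u$ be marked with its unique unmarked neighbour being a child $w$; then every other neighbour of $u$, in particular its parent $p$, is already marked, so $p$ was marked at an earlier round. Since $u$ is marked only now, $u$ was unmarked throughout, hence $u$ was an unmarked neighbour of $p$ at the moment $p$ was marked; as $p$ then had exactly one unmarked neighbour, that neighbour was $u$, i.e. $p$ too was marked with its unique unmarked neighbour being one of its children. Iterating, this produces a strictly cycle-ward chain $u, p, p', \dots$ in which each vertex is marked while a child of it is its unique unmarked neighbour, a chain that must terminate at a vertex of $C$ and so contradicts the invariant that cycle vertices are never marked. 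Hence the unmarked neighbour is the parent, and relocating $p_\alpha$ there decreases $d(\cdot,C)$ by exactly~$1$.

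Finally I would bridge the local map and the real graph, since the marking and the pebble moves are executed on $\Gr_\alpha$. On the off-cycle part this is immediate: a DFS in a tree never revisits a vertex, so each $G_w$ is mapped faithfully and the $\Gr_\alpha$-parent of a marked node is the real parent, one step closer to $C$. The only place where $\Gr_\alpha$ may disagree with $G$ is on the cycle, where a single $v \in C$ can appear as several $\Gr_\alpha$-copies; but each such copy still carries the true degree $\delta(v)$, and at least its two cycle-ports never lead to a marked node (the two cycle-neighbours lie in $C$, hence are unmarked, and any undiscovered port carries no mark), so at most $\delta(v)-2$ of its neighbours are ever marked and no copy is ever marked $\cancel{C}$. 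Thus once $p_\alpha$ sits on $C$ it is never moved again, and every round either leaves the pebble fixed ($d_{t+1}=d_t$) or relocates it to its parent ($d_{t+1}=d_t-1$), giving $d_t \geq d_{t+1}$ for all $t$. The main obstacle I expect is precisely this last bridge: making the identification of the $\Gr_\alpha$-marking with $2$-core peeling rigorous in the presence of multiple $\Gr_\alpha$-representations of the cycle vertices, i.e. arguing that any ambiguity in the map is confined to $C$ and can never push an off-cycle pebble the wrong way.
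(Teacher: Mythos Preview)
Your proof is correct and follows essentially the same approach as the paper: show that cycle vertices are never marked $\cancel{C}$ (since each has two cycle-neighbours that remain unmarked), hence the unique unmarked neighbour of a marked tree-vertex must be its parent, so each pebble relocation strictly decreases the distance to $C$. Your version is in fact more careful than the paper's own proof, explicitly identifying the rule with $2$-core peeling, giving a clean descent argument for the ``unmarked neighbour is the parent'' claim, and handling the $\Gr_\alpha$-versus-$G$ identification that the paper treats only informally.
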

\begin{proof}
	Initially, the agents are arbitrarily placed on some nodes of the graph. During the first phase they place their pebbles on these nodes and start the exploration of the graph in epochs $e$ in a DFS way, and up to a maximum depth which depends on the epoch ($\textit{stepsAway} = 2^e$).
	
	Call $C$ the unique cycle and $G_u$ the (connected) tree starting from node $u \in C$ and $(C \setminus u) \notin G_u$, where an agent $\alpha$ is initially placed.
	As agent $\alpha$ moves on the graph, it constructs in its local memory the graph $\Gr_{\alpha}$.
	In order to mark a node $w \in \Gr_{\alpha}$ with $\cancel{C}$, all its neighbors $v$ except one must already be marked in $\Gr_\alpha$. This can only happen initially on the leaf nodes, then their neighbors, and so on.
	Now observe that all the nodes of the cycle (including $u$) have two neighbors that belong to the cycle $C$, thus, $\alpha$ cannot mark any of them.
	This means that all the nodes in the shortest path between the current position of its pebble and $u$ are not marked in $\Gr_{\alpha}$, while the rest of the nodes $v \in G_u$ will eventually be marked.
	When $\alpha$ marks the node that its pebble is, it picks it, and moves it on the unique neighbor that is not marked. Similarly, the above argument will be satisfied for the new position of its pebble.
	Because of this fact, a pebble can only move closer to the cycle every time the corresponding agent moves it, and eventually it will reach $u$.
	\qed
\end{proof}

\begin{lemma}[Cycle detection]\label{lemma:weak-gathering-7}
	When an agent enters phase $2$, its pebble is on the cycle, and it only moves on the nodes of the cycle.
\end{lemma}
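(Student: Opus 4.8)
The plan is to exploit the structure of $\Gr_\alpha$ as the depth-truncated \emph{unfolding} (universal cover) of the underlying unicyclic graph $G$. First I would make precise the fact that, because the agent recognizes only immediate backtracking (the parent port) and cannot detect any other revisit of an anonymous node, the local map $\Gr_\alpha$ produced by the DFS mapping is exactly this unfolding: every vertex of $\Gr_\alpha$ corresponds to a non-backtracking walk from $\alpha$'s root, and two vertices of $\Gr_\alpha$ project to the same node of $G$ precisely when their defining walks differ by traversals of the unique cycle $C$. The single structural fact I need is then: a \emph{simple path} in the tree $\Gr_\alpha$ projects to a \emph{non-backtracking} walk in $G$, and a non-backtracking \emph{closed} walk in a unicyclic graph visits only nodes of $C$ (a non-backtracking walk that leaves $C$ into a pendant tree is a simple path in an acyclic region, so it can never return and hence can never close up).

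With this in hand the main argument is short. When $\alpha$ enters phase $2$ it has, by construction, found $k+1$ distinct $\mathcal{T}$-marked vertices, placed all of them on a single simple path $\mathcal{P}$ of $\Gr_\alpha$, closed $\mathcal{P}$ into the candidate $C'$ (identifying its endpoints $v_0,v_s$), and then \emph{physically re-traversed} $C'$ in $G$, checking that every recorded port is available and every pebble appears exactly where $C'$ predicts. I would argue that this verification can succeed only if $C'$ is the real cycle $C$ traversed some number $m \geq 1$ of times: the image of $\mathcal{P}$ in $G$ is non-backtracking, the closure forces it to be a closed non-backtracking walk, and by the structural fact such a walk lies entirely on $C$. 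Since there are exactly $k$ distinct pebbles on $G$, reaching the count $k+1$ indeed requires revisiting one pebble through a cycle traversal, which is consistent with $C' = C^m$; the per-epoch ``first visit'' marking (Lemma~\ref{lemma:weak-gatherin-3}) is what guarantees the count and the path $\mathcal{P}$ are built from mutually consistent data. Crucially, $\alpha$'s own pebble is one of the $k+1$ counted vertices, so it lies on $\mathcal{P}$ and hence on $C'$; as every node of the verified $C'$ projects onto $C$, the real node carrying $\alpha$'s pebble is on the cycle, which is exactly the first assertion.

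For the second assertion I would simply observe that, once in phase $2$, every move prescribed by the algorithm is a step clockwise or counterclockwise along $C'$ (in both the \emph{walking} and \emph{gathering} states), and since $C' \subseteq C$ the agent never leaves the cycle; moreover its pebble is not moved again, so by Lemma~\ref{lemma:weak-gathering-1} (its distance to $C$ is $0$ and non-increasing) it stays on $C$. I would also note that missing edges cannot break the verification: by $1$-interval connectivity a blocked edge is only temporarily disabled, and the agent waits (incrementing \emph{roundsBlocked}) until it can complete the traversal, so edge dynamics only delay the check.

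The hard part will be ruling out \emph{false positives}: proving that no configuration with $\alpha$'s pebble off the cycle can pass the port-and-pebble consistency check, despite the pebbles being mutually indistinguishable (so that $\alpha$ might try to match a pebble to the ``wrong'' underlying pebble). This is precisely where the unfolding viewpoint carries the weight — the verification re-walks the \emph{actual} deterministic port sequence of $C'$ in $G$, and because a non-backtracking walk into a pendant tree can never return without backtracking, any candidate $C'$ that dips off the cycle fails either to close up or to reproduce the recorded ports. I expect the bulk of the formal write-up to lie in establishing that $\Gr_\alpha$ is faithfully the unfolding and that the consistency check really pins down the walk up to the cyclic ambiguity $C \mapsto C^m$.
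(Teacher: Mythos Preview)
Your proposal is correct and rests on the same core observation as the paper: a non-backtracking walk that leaves $C$ into a pendant tree can never return, so the port/pebble verification of $C'$ can succeed only when every node of $C'$ lies on $C$ (hence, since $\alpha$'s own pebble is one of the $k+1$ marked vertices on $\mathcal{P}$, it is on $C$). The paper states this operationally (``it will either need to traverse the edge through which it arrived \ldots\ or it will not meet the $k{+}1$ pebbles''), whereas you package it via the universal-cover/unfolding view of $\Gr_\alpha$ and the lemma that a closed non-backtracking walk in a unicyclic graph is contained in $C$; this is a cleaner abstraction but not a different argument, and what it buys you is a uniform reason why the verification cannot be fooled, without splitting into the paper's two failure modes.
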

\begin{proof}
	When an agent $\alpha$ encounters a pebble on an unmarked node of $\Gr_\alpha$, it marks it with a special character $\mathcal{T}$, and when it marks $k + 1$ nodes it executes the {\em cycle detection} procedure.
	Call $C$ the set of nodes that form the cycle in the underlying graph and $\mathcal{T}_{\Gr_\alpha}$ the set of nodes that an agent $\alpha$ marked in $\Gr_\alpha$.
	
	The {\em first step} of the {\em cycle detection} procedure is to check whether the shortest path $\mathcal{P} = \{e_0, e_1, \dots, e_s\}$, with vertex sequence $\mathcal{V} = \{v_0, v_1, \dots, v_s\}$ which connects all the nodes in $\mathcal{T}_{\Gr_\alpha}$ is a line in $\Gr_\alpha$, or not.
	If not, it resets its \textit{pebblesFound} variable to zero, changes their marks to $\cancel{\mathcal{T}}$, and the procedure stops.
	Otherwise, it constructs a cycle $C'$ in its local memory and traverses it in order to verify if all nodes of $C'$ are on the cycle.

	
	
	

	In this step, while traversing $C'$, if the port numbers do not match with the ones visited, the {\em cycle detection} fails.
	Observe that if there exists a node $w \in \mathcal{V}$ such that $w \notin C$, then this procedure will fail. This is because it will either need to traverse the edge through which it arrived at a node that is not on the cycle, or it will not meet the $k+1$ pebbles of  $\mathcal{T}_{\Gr_\alpha}$. This holds even in the case where the rest of the agents moved their pebbles. Then, the agent $\alpha$ will surely pass through the nodes of the initial positions of the pebbles and the procedure will fail.
	
	If {\em cycle detection} succeeds, then all nodes of $C'$ are on the cycle (including the node where its own pebble is); $\alpha$ enters to the second phase and it only moves on these nodes.
	\qed
\end{proof}


In contrast to the literature on exploration of graphs, in our model the agents cannot assign distinct labels on the nodes, thus recognize them when encountered again (cf., e.g., \cite{PP1998}).
This difficulty comes from the fact that the communication model that we consider does not allow the agents to write information on the nodes other than leaving a constant number of identical {\em pebbles} (i.e., one bit of information), {\em indistinguishable} for all agents.
For this reason, when an agent enters to the cycle and completes a tour, the whole graph is again considered as unexplored.
However, in our algorithm we guarantee that after $O(n^3\log{n})$ rounds, all pebbles reach the cycle and all agents enter to the second phase which solves {\em weak gathering}.

\begin{lemma}\label{lemma:weak-gathering-2}
	The number of rounds until all pebbles reach the cycle is bounded by $O(n^3\log{n})$.
\end{lemma}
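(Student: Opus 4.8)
The plan is to bound the time for a single agent to drive its own pebble to the cycle and then take the maximum over the $k$ agents, since all agents act simultaneously in every round and each agent only ever moves its own pebble. First I would invoke Lemma~\ref{lemma:weak-gathering-1}: the distance $d_t(p_\alpha,C)$ is non-increasing, so the pebble of $\alpha$ performs at most $n$ forward moves before reaching its root $w\in C$ (its initial distance to the cycle is at most $n-1$). Hence it suffices to bound the number of rounds the exploration needs in order to realise each forward move, i.e.\ to get the pebble's current node marked with $\cancel{C}$ and to have the agent return to it.

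Next I would exploit the epoch structure. In epoch $e$ the DFS in $\Gr_\alpha$ reaches depth $2^e$, and a node of $G_w$ is marked $\cancel{C}$ once its whole subtree away from $C$ has been explored and pruned. Since the height of $G_w$ is at most $n$, any epoch with $2^e \ge n$ fully prunes $G_w$ and pushes the pebble all the way to $w$; thus $e^\star=\lceil\log n\rceil = O(\log n)$ epochs always suffice, which is the source of the logarithmic factor. It then remains to bound the cost of the epochs. I would argue that the depth-$2^e$ ball of $\Gr_\alpha$ has $O(n\cdot 2^e)$ nodes: because $G$ is unicyclic, the only way $\Gr_\alpha$ grows beyond $G_w$ is by unrolling the unique cycle, and within depth $2^e$ the DFS can wind around $C$ at most $O(2^e/c)$ times, each full turn re-traversing the hanging trees at a total cost of $O(n)$. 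A DFS tour traverses each such edge at most twice, so a blocking-free epoch $e$ costs $O(n\cdot 2^e)$ rounds, and the geometric sum over $e\le e^\star$ telescopes to $O(n^2)$.

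Finally I would fold in the adversary. Since $G$ is unicyclic, the scheduler can disable only cycle edges, one per round; tree edges are never missing, so the pruning of $G_w$ can always make progress and every pebble does reach $C$. The remaining, and main, obstacle is to bound how much the adversary can stall a single DFS tour: whenever the tour reaches a node of $C$ and attempts to cross a cycle edge that edge may be disabled, and the agent waits. I would charge these waits against the blocking threshold $\delta n\log n$ used by the termination rule and against the $O(n)$ cycle crossings a tour may attempt, showing that blocking inflates the total blocking-free cost by at most a factor $O(n\log n)$. Multiplying the $O(n^2)$ blocking-free cost by this factor yields $O(n^3\log n)$ rounds for one agent, and taking the maximum over the $k$ agents (which run in parallel) gives the claimed bound. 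I expect the delicate part to be exactly this last accounting, namely proving that the adversary cannot stall a tour long enough to prevent an epoch from completing while keeping the accumulated delay inside the stated polynomial; the exploration and monotonicity arguments themselves are comparatively routine.
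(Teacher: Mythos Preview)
Your overall decomposition (epochs, monotone pebble movement, DFS size, adversary delay) matches the paper's, but there are two genuine gaps.

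First, you never account for the \emph{cycle detection} subroutine, and in the paper's analysis that is precisely the term that produces the third factor of $n$. During Phase~1, every time an agent has seen $k+1$ pebbles it stops the DFS, builds a candidate cycle $C'$ of length $|C'|=O(n)$, traverses it, and (since its own pebble is not yet on $C$) fails and traverses it back. All of these $O(n)$ traversals are on cycle edges and can each be blocked for $\delta n\log n$ rounds. The paper bounds the number of complete cycle tours by $T<2|G_u|/|C|$, allows one failed cycle detection per tour, and gets
\[
S \;=\; T\bigl((2|C|+2\epsilon n)(\delta n\log n-1)+2(n-|C|)\bigr)\;=\;O(n^3\log n),
\]
the dominant contribution being $T\cdot O(n)\cdot O(n\log n)=O(n^3\log n/|C|)$, maximised when $|C|=\Theta(1)$. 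Your argument contains only the DFS traversals and therefore misses this term entirely.

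Second, your final accounting is not sound on its own terms. You correctly note that only cycle edges can be disabled, and you count $O(n)$ cycle crossings in the whole DFS; the delay they induce is then \emph{additive}, namely $O(n)\cdot O(n\log n)=O(n^{2}\log n)$, to be added to the blocking-free $O(n^{2})$. That gives $O(n^{2}\log n)$, not $O(n^{3}\log n)$. Turning this into a multiplicative ``inflation factor $O(n\log n)$'' applied to the full $O(n^{2})$ cost is unjustified, because the $O(n^{2})$ tree-edge traversals cannot be blocked at all. So the right bound from your ingredients alone would be $O(n^{2}\log n)$; the extra factor of $n$ only appears once the (blockable) cycle-detection traversals are included.
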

\begin{proof}
	Call $C$ the unique cycle and $G_u$ the (connected) tree starting from node $u \in C$ and $(C \setminus u) \notin G_u$, where an agent $\alpha$ is initially placed.
	Let $w$ be the initial node of $\alpha$, and $d(u, w) > 0$ the (shortest) distance between $u$ and $w$.
	In order for an agent to move its pebble (by Lemma \ref{lemma:weak-gathering-1} closer to the cycle), it must first explore all nodes of $G_u$ in the worst case (i.e., reach all the leaves of $G_u$).
	The agents start from epoch $e = 0$ and perform DFS up to distance $2^e$. In the worst case, the diameter of $G_u$ is $|G_u| - 1$. When $2^e \geq |G_u - 1| \Rightarrow e = \lc \log{(|G_u| - 1)} \rc$ the agent moves its pebble on $u$ by the end if this epoch.
	
	The number of steps in each epoch depends on the topology of the graph. In particular, when an agent enters the cycle and completes a tour, the whole graph can again be considered as unexplored, thus the agent continues exploring nodes that has already visited in previous rounds.
	
	In an epoch $e$, $\frac{2^e}{|C|}$ complete tours can occur, where $|C|$ is the size of the cycle.
	The total number of complete tours of the cycle until epoch $e = \lc \log{(|G_u| - 1)} \rc$ can be bounded by:
	\begin{equation}
	T = \sum_{e = 0}^{\lc \log{(|G_u - 1|)} \rc} \frac{2^e}{|C|} = \frac{2^{\lc \log{(|G_u| - 1)} \rc + 1} - 1}{|C|} < 2 \frac{|G_u|}{|C|}
	\end{equation}
	
	Due to the $1$-interval connectivity, the scheduler can block $\alpha$ when it wants to traverse an edge of the cycle.
	During the DFS exploration, the number of edge traversals on the cycle is $2|C|$ for every complete tour of it. Now observe that if $\alpha$ is blocked for more than $ \delta n\log{n}$, for some small constant number $\delta$, rounds, it terminates, and as we show later gathering is achieved. This means, that in the worst case which does not lead to gathering, the scheduler blocks the agents for $\delta n\log{n} - 1$ rounds for each edge traversal in $C$.
	In addition, for each cycle tour, $n - |C|$ nodes (in the worst case) can be explored without being blocked by the scheduler.
	
	In addition, when agent $\alpha$ visits $k + 1$ pebbles (including its own pebble), it executes the {\em cycle detection} subroutine of the algorithm which performs $|C'|$ edge traversals, where $C'$ is the cycle constructed in the local memory of $\alpha$.
	We have that $e \leq \lc \log{(|G_u| - 1)} \rc \leq \lc \log{n} \rc$, thus $|C'| \leq \epsilon n$, for some small constant number $\epsilon$.
	
	By Lemma \ref{lemma:weak-gathering-7}, for this to succeed, its own pebble first needs to reach the cycle. However, the distance between its pebble and the cycle is $d(u, w) > 0$, thus, it will fail. Then, $\alpha$ performs $|C'|$ more edge traversals in order to reach the node where it was in the beginning of the {\em cycle detection} step.
	For each edge traversal, as before, $\alpha$ can remain blocked for $\delta n\log{n} - 1$ rounds. For each complete tour of the cycle, it can perform {\em cycle detection} only once. This is because after an unsuccessful cycle detection, the agents mark with $\cancel{\mathcal{T}}$ the nodes with pebbles in $\Gr_\alpha$.
	
	Therefore, the total number of rounds needed for a pebble to reach the cycle, considering the worst case of the scheduler choices can be bounded by:
	\begin{equation}
	\begin{split}
	S &= T ((2|C| + 2\epsilon n)(\delta n\log{n} - 1) + 2(n-|C|)) \\ &= O(n^3\log{n})
	\end{split}
	\end{equation}
	\qed
\end{proof}

\begin{lemma}\label{lemma:weak-gatherin-3}
	Each agent in phase $1$ of the algorithm visits all nodes of the cycle every $O(n\log{n})$ rounds, if not blocked by the scheduler.
\end{lemma}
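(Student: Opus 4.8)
The plan is to track the depth-limited DFS once the current epoch is large enough for the agent to reach and go around the cycle, and then to bound the number of rounds between two consecutive moments at which the agent has visited every node of $C$. First I would make precise the shape of the structure $\Gr_\alpha$ near the cycle. Because a node is never re-identified across visits (the pebbles are indistinguishable and no labels are written), $\Gr_\alpha$ is, around the cycle, the universal cover of the underlying unicyclic graph: the unique cycle $C$ unrolls into an infinite path (a ``spine''), and at every copy of a cycle node $w$ the hanging tree $G_w$ (with $(C\setminus w)\notin G_w$) is attached as a side branch. The key consequence is that the depth-limited DFS advances monotonically along the spine, so it meets the underlying cycle nodes in their cyclic order and visits all of $C$ exactly once per $|C|$ consecutive spine steps.

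Next I would fix the regime in which the statement is meaningful. Using the epoch mechanism (DFS up to $\textit{stepsAway}=2^e$), Lemma~\ref{lemma:weak-gathering-1} (the distance $d$ from the agent's starting position to $C$ only decreases as its pebble is pushed cycle-ward), and the epoch bound $e\le \lc \log(|G_u|-1) \rc \le \lc \log n \rc$ established in the proof of Lemma~\ref{lemma:weak-gathering-2}, I would argue that from the first epoch with $2^e \ge d+|C|$ onward the DFS reaches $u\in C$ and sweeps at least one full period of the spine; hence in every such epoch the agent does visit all of $C$. By Lemma~\ref{lemma:weak-gathering-7} this regime is precisely the one the agent remains in throughout Phase~$1$.

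The core estimate is the per-sweep cost. I would charge to a single sweep of $|C|$ consecutive spine nodes the $|C|$ spine-edge traversals together with the (at most) $2(s_w-1)$ steps of the depth-limited detour into each side tree $G_w$, where $s_w$ is the size of $G_w$ and $\sum_{w\in C} s_w = n$. This gives at most $|C| + \sum_{w\in C} 2(s_w-1) = 2n-|C| = O(n)$ rounds per sweep. Two further contributions must be added but are also $O(n)$: (i) the backtracking to the root at the end of an epoch together with the re-descent to the cycle at the start of the next, each bounded by the size of the starting tree and hence $O(n)$; and (ii) the at most one invocation per epoch of the \emph{cycle detection} subroutine, which traverses $C'$ with $|C'|\le \epsilon n$ and therefore costs $O(n)$. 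Summing, an unblocked agent completes a full visit of $C$ every $O(n)=O(n\log n)$ rounds, which is the claim.

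The hard part will be item (i): arguing rigorously that the ``off-spine'' stretches, in which the agent is exploring the starting tree or backtracking across an epoch boundary rather than touching $C$, cannot accumulate beyond the stated bound. The clean way to handle this is to note that each such stretch is a DFS detour into a finite tree of at most $n$ nodes, so it costs $O(n)$, and that within one epoch only $O(1)$ of them separate consecutive spine sweeps. The $\log n$ slack in the statement then comfortably absorbs the $O(\log n)$ epochs traversed before the pebble reaches $C$; stating the bound as $O(n\log n)$ also matches the blocking threshold $\delta n\log n$, which is exactly the form in which this lemma is applied later to guarantee that, while one agent is blocked at a cycle node, every other unblocked agent has had time to traverse all of $C$.
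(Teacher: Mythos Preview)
Your approach is essentially the paper's: bound the warm-up epochs by $O(n\log n)$, then show each subsequent sweep of $C$ costs $O(n)$ of DFS plus $O(n)$ of cycle detection. The universal-cover/spine description is a clean rephrasing of what the paper computes directly, namely $\sum_{e\le\log n}2|S_e|\le 2n\log n$ for the initial epochs and then $2n$ per DFS pass once $e\ge\log n$.

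One inaccuracy: cycle detection can be invoked once per \emph{tour of $C$}, not once per epoch---an epoch with $2^e\gg|C|$ contains $2^e/|C|$ tours, each of which can trigger it. Your $O(n)$-per-sweep bound survives anyway, because collecting $k{+}1$ fresh $\mathcal{T}$-marks forces a full tour of $C$ between consecutive invocations. The paper also spends a case analysis you omit: it argues that whenever $C'$ fails to contain all of $C$, the pebble-marking discipline (mark only on the first visit to a node during an epoch; decrement \textit{pebblesFound} if a $\mathcal{T}$-marked node is revisited and the pebble is gone) prevents cycle detection from being triggered at all, so a triggered traversal of $C'$ \emph{always} covers $C$ and never delays the next full visit. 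You instead treat cycle detection as a pure $O(n)$ overhead added to the sweep; both accountings reach the stated bound.
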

\begin{proof}
	Consider an agent $\alpha$, initially in node $u$ and distance $d_1 = d(u, C)$ from the cycle $C$.
	When $2^e = d_1 + |C|/2 \Rightarrow e = \log{(d_1 + |C|/2)} \leq \log{n}$, the agent traverses all nodes of the cycle for the first time.
	The number of rounds in each epoch depends on the topology of the graph.
	Let $S_e$ be the set of nodes which are in distance at most $2^e$ from the initial position of agent $\alpha$, and $S_e$ does not contain all nodes of $C$ (if it contains all nodes of $C$, then it will traverse $C$ prior to the $\log{n}$ -th phase in the worst case).
	Then, the total number of rounds of the DFS exploration until epoch $e = \log{n}$ is $\sum_{e=0}^{\log{n}}2S_e \leq \sum_{e=0}^{\log{n}}2n = n\log{n}$. This means that it takes $n\log{n}$ rounds to reach all nodes of the cycle for the first time, and then $e \geq \log{n}$ guarantees that in each phase the agent traverses the cycle every at most $2n$ rounds (DFS exploration) in phase $1$.
	
	

	In order to find the number of rounds between the cycle traversals we need to study the number of times that the {\em cycle detection} procedure is executed, which may delay an agent from visiting all nodes of $C$.
	Observe that only if the shortest path between the $\mathcal{T}$-marked nodes is a line in $\Gr_\alpha$ the agent stops the DFS exploration and traverses $C'$.
	We now distinguish two cases.
	
	(1) $C'$ contains all nodes of $C$: The agent $\alpha$ marks all $k$ nodes with pebbles and performs a complete tour of the cycle in order to visit a $(k+1)$-th pebble. In this case, {\em cycle detection} might be executed, however, $C'$ contains all nodes of $C$. This means that when the {\em cycle detection} is executed, the agent visits all nodes of the cycle after at most $n$ rounds. In case that this procedure fails, the agent traverses again the cycle after at most $n$ rounds, and then continues with the DFS exploration. Otherwise, the agent enters to the second phase of the algorithm, which by Lemma \ref{lemma:weak-gathering-7} moves only on $C$.

	(2) $C'$ does not contain all nodes of $C$: Let $T_{\Gr_\alpha}$ be the $\mathcal{T}$-marked nodes of $\alpha$ in $\Gr_{\alpha}$, and $|T_{\Gr_\alpha}| = k + 1$. Let $P$ be the set of nodes with pebbles of the underlying graph, and $T_{\Gr_\alpha} \cancel{\equiv} P$ (there is some node in $P$ which was not visited by $\alpha$).
	This can be the result of some other agent $\alpha'$ moving its pebble $p$ on a different node and $\alpha$ has marked both positions of $p$ (otherwise $C'$ would contain all nodes of $C$).
	However, in order for $\alpha$ to start the {\em cycle detection} procedure, all nodes in $T_{\Gr_\alpha}$ have to be on a line in $\Gr_\alpha$. This means that even in the case where $\alpha'$ moved its pebble, $\alpha$ will either pass through the node where $p$ initially was, in which case $\alpha$ decreases \textit{pebblesFound} by one and unmarks the corresponding node in $\Gr_\alpha$, or it will not mark the new position of $p$, as this will not be the first time that it visited that node during that phase.
	In both cases, $\alpha$ will not execute the {\em cycle detection} procedure.
	
	

	Finally, for $e \leq \log{n}$, the agents traverse the cycle after at most $n\log{n}$ rounds. For $e \geq \log{n}$, they traverse it every at most $3n$ rounds.
	\qed
\end{proof}

\subsubsection{Second phase of the algorithm}

We now show that phase $2$ of the algorithm successfully gathers all agents either at the same node, or at the endpoints of the same edge.

\begin{lemma}\label{lemma:weak-gathering-termination}
	Let the variable $\textit{roundsBlocked}_{\alpha}$ of an agent $\alpha$ be $\delta n\log{n}$, for some small constant number $\delta$. Then, all agents are gathered on the endpoints of the missing edge and terminate.
\end{lemma}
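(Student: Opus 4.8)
The plan is to exploit the fact that in a unicyclic $1$-interval connected graph the scheduler can disable at most one edge per round, and that this edge must be a cycle edge (deleting any tree edge, or a second cycle edge, would disconnect $G$). I would first argue that if $\textit{roundsBlocked}_\alpha$ grows to $\delta n\log n$ \emph{continuously}, then the single edge $e$ that $\alpha$ keeps trying to cross (the blocked-agent rule keeps $\alpha$ waiting in place and retrying the same edge) is the \emph{unique} missing edge in each of these $\delta n\log n$ rounds: were the scheduler to enable $e$ in any intermediate round, $\alpha$ would cross it and reset its counter, and it cannot keep $e$ disabled while simultaneously disabling a second edge without violating connectivity. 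Consequently $e$ is a fixed cycle edge, say $e=(v_0,v_m)$, throughout the whole window, and the cycle with $e$ removed is a path whose two ends are exactly the endpoints $v_0,v_m$ of the missing edge.

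The second step is to show that during this window every other agent is funnelled to $v_0$ or $v_m$. Since $e$ is the only missing edge, no agent is ever blocked \emph{before} it itself tries to traverse $e$; up to that moment it moves exactly as it would in a fully enabled graph. By Lemma~\ref{lemma:weak-gatherin-3} (and its phase-$2$ counterpart, which gives a cycle traversal every $O(n)$ rounds) each agent visits all nodes of the cycle within $O(n\log n)$ rounds, so within that many rounds it reaches an endpoint of $e$ and attempts to cross it. At that instant the blocked-agent rule takes over: a blocked agent waits in place, and since $e$ stays disabled for the remainder of the window it never moves away. Hence, choosing $\delta$ large enough to dominate the constant hidden in the $O(n\log n)$ bound of Lemma~\ref{lemma:weak-gatherin-3}, by the round in which $\alpha$'s counter reaches $\delta n\log n$ every agent has parked at one of $v_0,v_m$, so all $k$ agents are gathered on the endpoints of the missing edge.

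Finally I would deduce termination from the gathering just established. Once all agents sit at $v_0$ and $v_m$ and $e$ is still missing, no further agent can arrive at either endpoint, so the side-condition ``no other agent arrived at the same node during that round'' of the termination rule is satisfied. The blocked agent with $\textit{roundsBlocked}=\delta n\log n$ therefore drops its second pebble and halts; the agents sharing its node detect the extra pebble and halt as well. For the opposite endpoint $w$ there are two cases: either $e$ is re-enabled, in which case its agents cross, count a total of $k$ agents, and halt; or $e$ remains disabled until an agent on $w$ likewise drops its pebble, triggering the same cascade. In every case all agents terminate while occupying the two endpoints of $e$, which is precisely weak gathering.

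The step I expect to be the main obstacle is the second one: proving rigorously that \emph{every} agent commits to an endpoint of $e$ within the window rather than oscillating along the path or re-entering a pendant tree. This requires combining the traversal guarantee of Lemma~\ref{lemma:weak-gatherin-3} with the wait-in-place behaviour of the blocked-agent rule, and verifying that the ``no new arrival'' precondition of the termination rule is eventually met so that the pebble-drop actually fires instead of being postponed indefinitely.
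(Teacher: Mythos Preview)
Your proposal is correct and takes essentially the same approach as the paper: use the one-edge-per-round constraint to argue that every other agent is unblocked during the window, invoke Lemma~\ref{lemma:weak-gatherin-3} for phase-$1$ agents (and an $O(n)$ bound for phase-$2$ agents) to bring each of them to an endpoint of $e$, and then read off termination from the blocked-agent rule. You are in fact more careful than the paper on two points---you make explicit that the \emph{same} edge $e$ must be missing throughout the window (else $\alpha$'s counter resets) and you spell out the pebble-drop cascade and the ``no new arrival'' precondition---while the paper is slightly more explicit about the phase-$2$ case, distinguishing the \emph{walking} state ($\leq n$ rounds to reach $u$) from the \emph{gathering} state ($\leq 2n+n$ rounds).
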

\begin{proof}
	Let $\alpha$ be an agent that is blocked on some node $u$ of the cycle. By Lemma \ref{lemma:weak-gatherin-3} and because of the fact that the scheduler can only remove at most one edge in each round, all other agents in phase $1$ of the algorithm perform a block-free execution, thus, after at most $\delta n\log{n}$ rounds, for some small constant number $\delta$, they traverse the cycle and they reach $u$. An agent in phase $2$ of the algorithm can either be in state {\em walking} or {\em gathering}. In the first case, after at most $n$ rounds, it reaches $u$. In the second case, an agent needs $2n$ rounds to move towards the elected meeting point and then it walks the cycle (either clockwise or counterclockwise) for $n$ more rounds, which is enough to reach $u$.
	
	Finally, all $k$ agents end up on the two endpoints of the missing edge, the {\em termination condition} of $\alpha$ and then of the rest of the agents, is eventually satisfied.
	\qed
\end{proof}

\begin{lemma}\label{lemma:weak-gathering-8}
	Consider a set of agents $S$ moving towards a node $u$ in cycle $C$, following the shortest path. After $n$ rounds the agents of $S$ are in at most two nodes of $C$, and one of them is in $u$.
\end{lemma}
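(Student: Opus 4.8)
The plan is to reduce the cycle dynamics to two independent \emph{streams} and then control them with a counting argument based on the single-edge-per-round power of the scheduler. First I would split $S$ according to the two arcs of $C$ determined by $u$: since every agent follows a shortest path, an agent on one arc always moves in the fixed direction that takes it toward $u$, never overshoots $u$, and never crosses into the other arc (an agent sitting exactly antipodal to $u$ can be assigned to either arc by its chirality). Thus the agents form two monotone streams, each advancing toward $u$ along its own arc, and the only edges they ever contend for are the edges of their arc, in particular the two \emph{distinct} edges incident to $u$.

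Next I would establish the \emph{merging invariant}: two agents of the same stream that ever occupy the same node make identical decisions forever after (same port labelling, same memory-relevant information, and in any round either both are blocked or both move), so they stay merged. Consequently each stream is, at every round, a set of \emph{groups} sitting at distinct distances from $u$, and this set of occupied distances can only shrink: groups never split and may only coalesce or be absorbed at $u$.

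The heart of the proof is a budget argument that uses $|C|\le n$. Writing $m=|C|$, every agent starts at distance at most $\lfloor m/2\rfloor$ from $u$, and the \emph{head} of a stream (its group closest to $u$) has no stream-mate between it and $u$, hence it fails to advance in a round only if the scheduler blocks the single edge directly in front of it. Since the scheduler removes at most one edge per round, and the edge in front of the head of one stream differs from the edge in front of the head of the other, the numbers of rounds in which the two heads are stalled sum to at most $n$. If after $n$ rounds both streams still had a non-$u$ agent, then neither head would have reached $u$, forcing each head to have been stalled on more than $n-\lfloor m/2\rfloor$ rounds; summing gives more than $2n-m\ge n$ stalls, a contradiction. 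Hence at least one stream is entirely absorbed into $u$, so $u$ is occupied and at most one stream still has agents off $u$. I would then run the same bookkeeping \emph{inside} that surviving stream, on its head and its second group: keeping the head away from $u$ forces it stalled on more than $n-\lfloor m/2\rfloor$ rounds, and on each such round the second group's front edge is free, so the gap between them shrinks by one; the gap can grow only on the fewer than $\lfloor m/2\rfloor$ rounds in which the head advances, so with $m\le n$ the gap is driven below $1$ and the two groups must have merged. Therefore the surviving stream occupies a single non-$u$ node, and in total the agents sit on at most two nodes of $C$, one of which is $u$.

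The step I expect to be the main obstacle is this last gap-counting inside a single stream, because there the scheduler may spend almost its entire budget on one arc, so the argument must extract a contradiction from $m\le n$ alone rather than from sharing the block between the two arcs; the delicate point is verifying that every round in which the head is stalled really does let the follower close the gap (i.e.\ that a block in front of the head leaves the follower's front edge free) and that the $O(m)$-sized initial gap cannot survive the net shrink over $n\ge m$ rounds. A secondary point to nail down is the genuinely degenerate configuration in which all of $S$ lies on a single arc and the scheduler freezes the edge incident to $u$ forever: then the stream collapses to one node that need not be $u$, and I would either invoke the hypothesis that $S$ approaches $u$ from both sides (so both arcs are populated) or observe that this is exactly the blocked configuration handled by the termination condition, so the two-node conclusion still holds with the blocked node playing the role of the second node.
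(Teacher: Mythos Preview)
Your plan shares the paper's skeleton---split $S$ by the two arcs of $C$ and exploit that only one edge is removed per round---but both of your counting steps, as written, do not go through.

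In the step showing that one stream is entirely absorbed, you assert that ``if after $n$ rounds both streams still had a non-$u$ agent, then neither head would have reached $u$.'' This is false when \emph{head} means the group closest to $u$: the head of a stream can arrive at $u$ in round~$1$ while a tail agent at distance $\lfloor m/2\rfloor$ is then blocked for all remaining rounds, leaving that stream with a non-$u$ agent even though its head is at $u$. (The implication you want \emph{does} hold if you track the \emph{tail}---the farthest group---instead: a tail that reaches $u$ forces every agent in front of it to be at $u$; and tails on different arcs have disjoint stall rounds, so if neither tail arrives you get the $>2n-m\ge n$ contradiction.)

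Your within-stream gap argument is also too weak, and you correctly flagged it as the crux. With the head stalled $S>n-\lfloor m/2\rfloor$ rounds and the follower blocked at most $n-S$ rounds, the net shrink you extract is only $2S-n>n-m\ge 0$; for $n=m$ this does not kill an initial gap of size $\lfloor m/2\rfloor-1$. The fix---and this is exactly what the paper does---is to drop the gap variable and count the follower's \emph{absolute} moves: since the head is blocked at least $n/2+1$ rounds and blocks at distinct nodes are disjoint, the follower is blocked at most $n/2-1$ rounds and therefore makes at least $n/2+1>\lfloor m/2\rfloor\ge d_{\text{follower}}$ moves. If it never met the head it would have to overshoot $u$, which is impossible because it cannot pass the head; hence it merges.

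More generally, the paper avoids your two-stage decomposition. It simply observes that at most one node of $C$ can accumulate more than $n/2$ blocked rounds, so every agent not at that node makes at least $n/2+1$ moves; this is already enough to reach $u$ if nothing is in the way, and enough to catch the heavily blocked agent if it lies ahead on the same arc. One pigeonhole on total moves handles both arcs and the within-stream merging simultaneously, without ever tracking a gap.
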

\begin{proof}
	Consider a set of agents $S_1 \in S$ moving clockwise and a set of agents $S_2 \in S$ moving counterclockwise.
	Consider two agents $\alpha_1, \; \alpha_2  \in S_1$ moving towards $u$.
	Assume that in the shortest path to $u$, the distance between $\alpha_1$ and $u$ is $d_1$ and the distance between $\alpha_2$ and $u$ is $d_2$.
	
	The number of successful edge traversals until they reach $u$ is at most $n/2$.
	Assume that $\alpha_1$ didn't reach $u$ after $n$ rounds. This means that it was blocked for at least $n/2 + 1$ rounds. Since $1$-interval connectivity in this setting allows only one edge to be missing in each round, $\alpha_2$ can be blocked for at most $n/2 - 1$ rounds (when not in the same node with $\alpha_1$). Thus, if $d_1 < d_2$, $\alpha_2$ reaches $\alpha_1$ by round $n$, and if $d_1 > d_2$, it reaches $u$ by round $n$.
	
	Now consider an agent $\alpha_3 \in S_2$ moving towards $u$ (different orientation from $\alpha_1$ and $\alpha_2$).
	Since $\alpha_1$ was blocked for at least $n/2 + 1$ rounds and the agents follow the shortest path to $u$ (they cannot be blocked on the endpoints of the same edge), $\alpha_3$ can be blocked for at most $n/2 - 1$ rounds. Thus it reaches $u$ by round $n$.
	
	Overall, if an agent $\alpha$ is blocked for more than $n/2 + 1$ rounds, then all agents that move in the same orientation towards $\alpha$ reach $\alpha$ by round $n$, while the rest of the agents reach $u$. Otherwise, all agents reach $u$ by round $n$.
	\qed
\end{proof}

\begin{theorem}\label{theorem:weak-gathering-6}
	All agents after $O(n^3\log{n})$ rounds enter phase $2$, elect the same node as the meeting point and solve weak gathering.
\end{theorem}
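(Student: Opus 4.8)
The plan is to assemble the earlier lemmas into a single convergence argument, treating the time bound, the consistency of node election, and the final merging as three separate concerns. First I would establish that every agent enters phase~$2$ within $O(n^3\log n)$ rounds. By Lemma~\ref{lemma:weak-gathering-2}, after $O(n^3\log n)$ rounds every pebble (including each agent's own) has reached the cycle $C$, and by Lemma~\ref{lemma:weak-gathering-1} no pebble leaves $C$ afterwards, so from this point the set of $k$ distinct pebbles is fixed and lies entirely on $C$. I would then invoke Lemma~\ref{lemma:weak-gatherin-3}: any agent still in phase~$1$ and not permanently blocked traverses all of $C$ within $O(n\log n)$ further rounds, during which it marks all $k$ pebbles and re-encounters its own on the next tour, so that $\mathcal{T}_{\Gr_\alpha}$ reaches size $k+1$ with all marked nodes lying on a line in $\Gr_\alpha$. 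Since the pebbles, and in particular its own pebble, are genuinely on $C$, the cycle-detection test of Lemma~\ref{lemma:weak-gathering-7} succeeds, the agent builds $C'\equiv C$, and enters phase~$2$. The only alternative is that some agent is blocked for $\delta n\log n$ rounds, in which case Lemma~\ref{lemma:weak-gathering-termination} already yields weak gathering on the endpoints of the missing edge and termination; so I may henceforth assume every agent enters phase~$2$.

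Next I would argue that the agents elect a common meeting node. The key point, prepared in the description of \emph{Unique node election}, is that although the maps $\Gr_\alpha$ may differ across agents, once an agent is in phase~$2$ its cycle $C'$ consists, by Lemma~\ref{lemma:weak-gathering-7}, exactly of the nodes of $C$. Since the $k$ pebbles are fixed on $C$ and the port labelling is an immutable property of the underlying graph, every agent's $C'$ encodes the same \emph{unrooted, unoriented} labelled cycle, namely the cyclic sequence of port pairs and pebble markings, which is intrinsic to $C$ and independent of where and in which direction a given agent happened to traverse it. The elected node $u$ is chosen as a deterministic, rotation- and reflection-invariant function of this labelled cycle (e.g. via its canonical form), so all agents compute the same $u$, and they then fix a common orientation from the ordered ports of $u$ exactly as described. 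If this labelled cycle is symmetric, no invariant node exists; here I would exploit the relaxation to \emph{weak} gathering, showing that the symmetric candidate groups are driven by the grouping rules either onto a common node or onto the two endpoints of a single edge, which still satisfies weak gathering.

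Finally I would verify convergence in phase~$2$. With a common $u$ and orientation fixed, all agents move toward $u$ along shortest paths; by Lemma~\ref{lemma:weak-gathering-8}, after $n$ rounds they occupy at most two nodes of $C$, one of which is $u$. The \emph{Grouping} rules then synchronise the start of the \emph{second step} (agents at $u$ move clockwise, the rest counterclockwise), producing two groups advancing toward each other. Applying Lemma~\ref{lemma:weak-gathering-8} once more, within $n$ rounds these groups either share a node, cross the same edge, or are blocked on the two endpoints of one edge; the first two cases are resolved by the merging predicate of \emph{Grouping} into a single group of all $k$ agents that then terminates, and the blocked case is resolved by the termination condition via Lemma~\ref{lemma:weak-gathering-termination}, again on the endpoints of one edge. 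In every branch the agents end at one node or at the two endpoints of an edge, so weak gathering holds. Since the post-phase-$1$ work costs only $O(n^2\log n)$ rounds, the total is dominated by the $O(n^3\log n)$ bound of Lemma~\ref{lemma:weak-gathering-2}.

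The hardest part, I expect, is the second paragraph: proving that the cycles $C'$ of different agents coincide as \emph{labelled} cyclic structures, so that node election is provably consistent, together with a clean treatment of symmetric configurations through the weak-gathering relaxation. The timing issue, that agents reach phase~$2$ at different rounds, is genuine but is absorbed by the grouping synchronisation, whereas the symmetry case is precisely where the relaxation from gathering to weak gathering must be used.
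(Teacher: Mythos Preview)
Your three-step decomposition (all agents reach phase~$2$; they elect the same $u$; the two groups merge or block on one edge) is exactly the paper's structure, and you cite the same lemmas in the same roles. So the approach matches.

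Where your plan is thin is precisely where the paper invests almost all of its proof effort: the asynchrony in entering state \emph{gathering}. You invoke Lemma~\ref{lemma:weak-gathering-8} as if all agents start the first step simultaneously, but that lemma is stated for a set $S$ that moves toward $u$ together; agents in fact switch from \emph{walking} to \emph{gathering} at different rounds, so a direct application is not valid. The paper handles this by fixing the round $r'$ at which the last agent has traversed $C$ in phase~$2$, partitioning the agents into a set $S_1$ whose gathering-start times all lie within an $n$-round window and a remainder $S_2$, then applying Lemma~\ref{lemma:weak-gathering-8} to $S_1$ to get two groups $G_1,G_2$ after $2n$ rounds, and finally doing a case analysis (all of $S_1$ reached $u$; or $G_2$ moves clockwise; or $G_2$ moves counterclockwise) to argue that $S_2$ is absorbed by the Grouping rules into one of these two groups. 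Your sentence ``absorbed by the grouping synchronisation'' is a placeholder for exactly this case analysis; be prepared to carry it out.

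One more remark: your second paragraph also attempts to handle the symmetric-configuration case via the weak-gathering relaxation. The paper's proof does \emph{not} argue this; it simply asserts that after $r'$ the agents have identical information and agree on $u$, implicitly assuming the labelled cycle admits a unique invariant node. Your instinct that symmetry needs separate treatment is sound, but the mechanism you sketch (grouping rules driving symmetric candidates onto an edge) is not what the algorithm actually does in that branch, so if you pursue it you will be proving something the paper leaves open rather than reproducing its argument.
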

\begin{proof}
	By Lemma \ref{lemma:weak-gathering-2}, after $O(n^3\log{n})$ rounds all pebbles reach the cycle, and when this happens, by Lemma \ref{lemma:weak-gathering-7}, all agents move only on the nodes of the cycle (in phase $2$).
	
	Let $r'$ be the round that the last agent traverses all nodes of the cycle for the first time in the second phase. This means that after $r'$ and because all agents have obtained the same information (i.e., the port labeling and the locations of the pebbles), all agents agree on the meeting node.
	Let also $R = \{r_1, r_2, \dots, r_k\}$ be the rounds that the $k$ agents enter to state {\em gathering} for the first time after $r'$.
	
	In the second phase of the algorithm the agents can either be in state {\em walking} or {\em gathering}.
	Consider a set of agents $S_1$ that are in state {\em gathering} and $|r_i - r_j| < n, \; \forall i,j \in S_1$, and a second set of agents $S_2$ contains the rest of the agents.

	At this point, by Lemma \ref{lemma:weak-gathering-8} all agents that enter phase $2$ at the same time, after at most $n$ rounds are divided into at most two groups $G_1$ and $G_2$, and one of them (say $G_1$) is on the elected node $u$. Thus, after $2n$ rounds all agents of $S_1$ are divided into two groups. In addition, the {\em Grouping} subroutine guarantees that the agents of $G_1$ and $G_2$ will continue moving as groups during the second step of phase $2$.
	We now consider two cases.
	
	(1) All agents of $S_1$ reached $u$. Then, some of the agents of $S_2$ reach $u$ and enter to state {\em gathering}, and the rest of the agents of $S_2$, again by Lemma \ref{lemma:weak-gathering-8}, they become a group that did not reach $u$ due to missing edges. Observe that this group walks the cycle counterclockwise, while the agents of $S_1$ walk the cycle clockwise. At this point there are two groups of agents moving towards each other. Therefore, the {\em Grouping} procedure guarantees that after at most $n$ rounds the two groups will either merge (in this case they terminate), or they will become blocked on the endpoints of the same edge until the {\em termination condition} will be satisfied.

	(2) In this case, the agents of $S_1$ are divided into two groups at round $r_1 + 2n$. During the first $2n$ rounds, some of the agents of $S_2$ may reach $u$, thus enter to state {\em gathering} and continue moving as a group with $G_1$.
	
	(a) If the agents of $G_2$ move clockwise, then the rest of the agents of $S_2$ may cross the agents of $G_2$ or arrive on the same node. In both cases they will merge into a single group.
	
	(b) If the agents of $G_2$ move counterclockwise, then the rest of the agents of $S_2$ end up on the same node with the agents of $G_2$. This is because the agents of $G_2$ remain blocked long enough that at round $r_1 + 2n$ they did not reach $u$. Then, all of the agents in the clockwise path between $G_2$ and $u$ after $2n$ rounds reach $G_2$.
	In this case, the agents of $S_2$ will reverse direction (to clockwise), however $G_2$ will continue moving counterclockwise. Then, they reverse direction again because the agents of $G_2$ move counterclockwise. This procedure continues until they will either reach $u$, or until some agent in $G_2$ enters to the second step of phase $2$. Then, they will cross each other and {\em Grouping} guarantees that they will merge into a single group.

	
	In all these cases, all agents reach either the same node and the {\em termination condition} is satisfied, or they become blocked at the endpoints of the same missing edge where, by Lemma \ref{lemma:weak-gathering-termination} they solve {\em weak gathering}.
	\qed
\end{proof}

\subsection{Towards dropping the additional assumptions}\label{sec:drop_assumption}

\subsubsection{Knowledge of $n$}

Observe that in our algorithm, $n$ is used in two cases.
The first case is on the {\em termination condition} where the agents terminate if they are blocked long enough for the rest of the agents to reach the same node or the other endpoint of the same (missing) edge.
If we assume that the agents do not know $n$ then it is not clear how and whether it is possible to achieve termination.

Our algorithm also uses $n$ during the second phase, where the agents need $n$ in order to guarantee complete tours of the cycle $C$. In this case we can replace $n$ with $|C'|$, where $C'$ is the locally constructed cycle of an agent. This is because for all agents $|C'| \geq |C|$ throughout the execution.

\subsubsection{Cross detection}

In \cite{LFPPSV18} the authors provide a mechanism which avoids agent crossing. In particular, each agent constructs an edge labeled bidirectional ring, such that the intersection of the labels assigned in the edges of the clockwise direction with the ones of the counterclockwise direction is empty. Then, the agents move on the actual ring subject to the constraint that at round $r$ they can traverse an edge only if the set of labels of that edge contains $r$.
This guarantees that two agents moving in opposite directions will never cross each other on an edge of the actual ring.

The above construction works only if all agents have the same reference point and have obtained the same sense of orientation.
In the second phase of our algorithm, all agents after $O(n^2 \log{n})$ rounds either solve {\em weak gathering} (by being blocked for very long), or traverse all nodes of the cycle and elect the same node as their meeting point.
Therefore, the labels of the {\em logic rings} of all agents eventually become the same and by slightly modifying our algorithm (e.g., allow $4n$ rounds during the {\em first step} in state {\em gathering}, and $2n$ during the {\em second step}), the Theorem \ref{theorem:weak-gathering-6} follows.

\subsubsection{Pebbles}

In our algorithm each agent is supplied with two identical pebbles. It uses one pebble in the {\em cycle detection} subroutine, and the second one in the {\em termination condition} in order to notify the rest of the agents that it was blocked for more than $\delta n \log{n}$ rounds.

For both cases, it might be possible to substitute the pebbles with knowledge of $n$. In particular, when an agent $\alpha$ is blocked for $\delta n \log{n}$ rounds it terminates. Then, the rest of the agents are gathered in the endpoints of the same edge, and they move towards each other. After a successful edge traversal of the group where the agent $\alpha$ terminated, they count that the number of agents was decreased by one. In this case they reverse direction. The other group of agents should then wait for the first group to catch them up.

Regarding the detection of the cycle, when an agent explores a path $P$ of size more than $n$, it means that the cycle $C$ is a subpath of $P$. When an agent finds such a path of size more than $n$, it can move to the first node of $P$ and start traversing all subpaths of size $i, \; 3 \leq i \leq n$ for at least $n$ rounds each. If the agent performs $n$ successful edge traversals and the port labeling of the nodes visited matches the one of $P$, then we believe that the agent has successfully identified the nodes that form the cycle.

\section{Open problems}

An immediate open problem is whether we can achieve the same results if the class of dynamics is the $T$-interval connectivity, for $T > 1$. If we consider probabilistic algorithms, can we find a more efficient algorithm for {\em weak gathering} in unicyclic graphs?
In addition, can we extend the class of solvable graphs if we impose a fairness assumption to the scheduler?
A very interesting question is whether the second phase of our algorithm can be replaced by a modified asynchronous version of the algorithm of Di Luna et al. (\cite{LFPPSV18}), where the starting times of the agents might be different.

In Section \ref{sec:drop_assumption} we argued that the communication model that we considered (i.e., the pebbles) can be substituted by knowledge of $n$.
Finally, in this setting it is not clear how to achieve termination without empowering the agents with knowledge of $n$.
We gave an intuition of how to achieve both, however we leave them as open problems.



\bibliographystyle{alpha-abr}
{\normalsize \bibliography{main}}

\end{document}